\documentclass{article}

\usepackage{graphicx} 
\usepackage{amsmath,amssymb}
\usepackage{booktabs}
\usepackage{amsthm}
\usepackage{xcolor}
\usepackage{verbatim}
\usepackage{tcolorbox}
\usepackage{enumitem}
\usepackage{lipsum}
\usepackage{comment}
\usepackage{float} 
\usepackage{tikz}
\usetikzlibrary{positioning, arrows.meta}
\usepackage{authblk}
\usepackage{natbib}
\usepackage{orcidlink} 

\title{\textbf{Anchor-proofness in Voting}}

\author[1]{Federico Fioravanti\orcidlink{0000-0002-7154-1192}\thanks{Corresponding author: federico.fioravanti@univ-st-etienne.fr}}
\author[2]{Zoi Terzopoulou\orcidlink{0000-0001-5289-434X}}
\affil[1]{Université Jean Monnet Saint-Étienne, CNRS, Université Lyon 2, emlyon business school, GATE, 42023 Saint-Étienne, France.}
\affil[2]{CNRS, Université Jean Monnet Saint-Étienne, Université Lyon 2, emlyon business school, GATE, 42023 Saint-Étienne, France.}
\date{}

\theoremstyle{definition}  
\newtheorem{teo}{Theorem}
\newtheorem{propi}{Proposition}
\newtheorem{lemma}{Lemma}
\newtheorem{coro}{Corollary}
\newtheorem{myrem}{Remark}

\newtheorem{defn}{Definition}
\newtheorem{myex}{Example} 
\newenvironment{exa}{\begin{myex}\rm}{\hfill$\vartriangle$\end{myex}} 
\newenvironment{remark}{\begin{myrem}\rm}{\hfill$\vartriangle$\end{myrem}} 
\newcommand{\pref}{p}
\newcommand{\prof}[1]{\boldsymbol{#1}}
\newcommand{\alt}{X}
\newcommand{\app}{A(\alt)}
\newcommand{\info}{\mathit{info}}

\begin{document}

\maketitle

\begin{abstract}
We study how order-dependent individual evaluations affect collective decision-making. We consider a voting environment in which voters have intrinsic preference rankings and acceptability thresholds, but form approval ballots sequentially as alternatives are presented by a social planner. Because alternatives encountered early on serve as anchors, approval ballots may depend on the presentation order even when intrinsic preferences remain unchanged. We ask when approval-based voting rules are robust to this order dependence, a property we call anchor-proofness.

We first show that no non-constant approval-based voting rule is anchor-proof. We then study weaker forms of robustness and characterize anchor-proof profiles for the classical approval voting rule and the nomination rule. Finally, we examine the strategic role of the social planner under different information structures. We show that, while a planner with no information about voters’ preferences cannot safely manipulate any rule satisfying weak and total unanimity, even limited information about voters’ acceptability or plurality assessments can restore manipulation. Overall, we highlight presentation order as an important design feature of collective decision-procedures when voters are subject to anchoring.

\end{abstract}

\noindent\textbf{Keywords:} anchoring; collective decisions; approval voting; manipulation\\
\noindent\textbf{JEL Classification: D71, D72, D91}

\section{Introduction}
A large literature in economics and psychology documents systematic cognitive biases in individual decision making \citep{tversky1974judgment}, including the \emph{anchoring bias}, whereby early information serves as a reference point that affects subsequent evaluations. The \emph{contrast effect} can be viewed as a particular manifestation of this bias \citep{Tversky_Griffin_contrasteffect2000}. The consequences of such biases for welfare, learning, and market behavior are well documented \citep[see, for example,][]{THALER198039,kahnemanthalerfairness1986,bordalosalience2012}. Comparatively less is known, however, about how individual-level biases affect collective decision-making. In particular, when individual evaluations are aggregated to determine a collective outcome, a bias that affects the formation of such individual evaluations may also affect the outcome of the group decision. Understanding this link requires studying how the procedure through which individual evaluations are formed interacts with the aggregation rule. To illustrate, consider Example~\ref{ex:pb}.

\begin{exa}
\label{ex:pb}
In participatory budgeting platforms, citizens consider different projects and approve certain ones to receive funding in their neighborhood.\footnote{See for example the Paris platform \texttt{www.paris.fr/budget-participatif}.} The order in which projects are presented can then become a relevant feature of the collective decision-process. Imagine a simple scenario where a voter is presented with only two projects $a$ and $b$, she considers both acceptable, but she intrinsically prefers project $a$ to project~$b$. If she encounters project~$a$ first, this may set a high \emph{anchor} against which project~$b$ is subsequently evaluated, and so the voter may approve $a$ but reject $b$. If instead she encounters project~$b$ first, she may approve both projects given that she finds them acceptable. Thus, the same intrinsic preferences can give rise to different approval ballots solely because of the presentation order in combination with an anchoring bias.

This individual-level order dependence can translate into different collective outcomes. Consider a group of five voters, two of whom prefer project $a$ to~$b$ and three of whom prefer project $b$ to~$a$. If everyone encounters their top alternative first, and hence approves only that alternative, then $b$ receives the most approval votes, as shown in Table~\ref{tab:before-anchor}.

\begin{table}[!ht]
\centering
\begin{minipage}{0.45\textwidth}
\centering
\begin{tabular}{ccccc}
1  & 2 & 3 & 4 & 5 \\
\toprule
$a$  & $a$ & $b$  & $b$ & $b$  \\
\bottomrule
\multicolumn{5}{c}{approval winner:  $b$}
\end{tabular}
\caption{Everyone sees their top alternative first.}
\label{tab:before-anchor}
\end{minipage}
\hfill
\begin{minipage}{0.45\textwidth}
\centering
\begin{tabular}{ccccc}
1  & 2 & 3 & 4 & 5 \\
\toprule
$a$ & $a$ & $a,b$ & $a,b$ & $a,b$  \\
\bottomrule
\multicolumn{5}{c}{approval winner:  $a$}
\end{tabular}
\caption{Everyone sees first alternative~$a$ and then alternative~$b$.}
\label{tab:after-anchor}
\end{minipage}
\end{table}

 If instead all voters encounter $a$ first and then $b$, the first two voters approve only $a$, while the remaining three approve both alternatives. Consequently, $a$ receives the most approval votes, as shown in Table~\ref{tab:after-anchor}. So, changing only the presentation order can reverse the collective outcome while leaving voters' intrinsic preferences unchanged.
\end{exa}

Sequential presentation of alternatives is common both in institutional contexts of collective decisions and in digital-choice architectures, where ordering is an implicit feature of the decision process.\footnote{In modern online platforms that rely on recommender systems, the ordering presented to users is often personalized. On the other hand, electoral contexts usually present alternatives in a fixed order for all voters. Some digital competitions of participatory budgeting incorporate randomized orders.} In such environments, anchoring can affect the formation of individual approval ballots by determining the reference point against which newly encountered alternatives are evaluated. Consequently, the ballots submitted to the aggregation rule may reflect not only voters' intrinsic preferences and acceptability evaluations, but also the procedure through which alternatives are presented. This raises a fundamental question for collective decision-making: can a voting rule produce outcomes that are robust to changes in the presentation order?

Randomizing the presentation order of the alternatives could be considered one way to mitigate the influence of anchoring in expectation. However, randomization does not necessarily provide a satisfactory solution in collective-choice settings. In electoral contexts, for instance, randomized procedures may raise concerns about transparency, uniformity, and  legitimacy. More fundamentally, a collective decision-procedure may be required to be robust in every realized instance rather than only on average across randomizations. Motivated by this perspective, we study a worst-case notion of robustness and ask whether collective outcomes remain invariant to deterministic presentation orders.

\paragraph{Our Contribution.}
We work within a framework of approval voting, where voters have intrinsic ordinal preferences over a finite set of alternatives and also hold an acceptability threshold over the alternatives. A key feature of our model is that approval ballots are generated from preferences via a simple \emph{anchoring} mechanism: a social planner specifies the order in which a voter encounters the alternatives, and the voter approves an alternative if and only if it is acceptable and strictly preferred to all previously encountered alternatives. Hence, the presentation order determines the mapping from intrinsic preferences to approval ballots.

Our analysis addresses two central questions. First, we study whether approval-based voting rules can be robust to anchoring. We call a rule \emph{anchor-proof} if the set of selected winners is invariant with respect to the presentation order. Such rules would guarantee that the collective outcome is unaffected by the way in which a social planner presents the alternatives. We show that every non-constant rule violates anchor-proofness in its most demanding form. We then identify conditions under which anchor-proofness can nevertheless arise for restricted preference structures or particular presentation orders.

Second, we study the strategic role of the social planner when the voting rule is not anchor-proof. A planner who has preferences over collective outcomes may exploit her control over the presentation order to influence the decision. We therefore examine how the possibility of such manipulation depends on the information available to the planner about voters' intrinsic preferences. Our results identify cases in which manipulation is possible and characterize how it varies with the voting rule and the planner's information. In particular, although limited information about voters' preferences can suffice for manipulation, we show that complete lack of information prevents a planner from safely manipulating voting rules satisfying weak and total unanimity.

\paragraph{Related Literature.}

A well-developed strand of the social choice literature studies the relationship between voters' underlying ordinal preferences and the approval ballots they cast. This relationship has been made explicit, for example, by \citet{endriss2009preference}. Following the seminal work on approval voting by \citet{brams1983approval}, much of this literature assumes that a sincere voter with an intrinsic ordinal preference forms a \emph{consistent} approval ballot, in which every approved alternative is preferred to every disapproved alternative \citep{barrot2013possible,endriss2013sincerity}. This consistency assumption is central to results comparing approval-based rules with ranking-based rules \citep{gehrlein1981borda,regenwetter1998choosing,regenwetter2004approval}. \citet{Terzopoulou2025} study the converse problem, exploring how ranking ballots can be generated from approval preferences, again under the assumption that approved alternatives are always preferred to disapproved ones. More recently, \citet{garcialaprestamartinezpanero2025} and \citet{alcantud2026} consider extensions of preference-approval structures in group decision making, including voting systems that combine approval information with ranking information. Our framework starts from a different premise: rather than taking preference-approval structures as primitive, we derive approval ballots from intrinsic preferences through a sequential anchoring process.

A closely related line of work investigates how limitations on voters' ability to access their full preferences affect the formation of ballots. The order in which alternatives are encountered determines the subset over which ballots are formed and the input to the aggregation rule consists of partial orders \citep{Terzopoulou2023}. The distinction with our framework is important. Instead of reporting ballots that are faithful restrictions of their intrinsic preferences, in our setting voters eventually encounter all alternatives, but the presentation order determines \emph{how} these alternatives are evaluated. Anchoring therefore remains relevant even without capacity constraints.

Our behavioral mechanism is specifically motivated by the literature on contrast effects \citep{Tversky_Griffin_contrasteffect2000}. Contrast effects have been documented across a wide range of evaluative settings, including judgments of crime severity and punishment \citep{pepitone1976contrast,RODRIGUEZconstrast2016107}, sentencing and jury decisions \citep{leibovitch2016relative,bindlerpathdependency2019}, financial markets \citep{hartzmarkfinancialmarkets2018}, speed dating \citep{bhargavafisman2014}, housing demand \citep{simonsohnhousingdemand2006}, commuting decisions \citep{simonsohncommuters2006}, judgments of facial attractiveness \citep{kramer2013sequential}, synchronized diving \citep{kramerdiving2017}, and performance evaluation on the Idol series \citep{PAGEidolseries2010186}. The evidence closest to our model is provided by \citet{radbruchschiprowski2025}, who exploit quasi-random assignment of candidates to evaluators and time slots and find that an individual's assessment decreases with the quality of the other candidates seen by the same evaluator. These findings provide behavioral support for the idea that evaluations can depend systematically on the presentation order during a decision process.

Our mechanism is also reminiscent of models of order-dependent choice. A literature initiated by \citet{rubinsteinsalant2006} studies choice from lists, where the order in which alternatives are presented can affect individual choice. 
Our framework differs from these models in that voters do not select a single alternative from a sequence; rather, they sequentially evaluate alternatives and construct an approval ballot. A related issue is order dependence arising from decisions previously taken rather than from alternatives previously encountered; for example, \citet{chendecmakundergambler2016} attribute such effects to the \emph{gambler's fallacy}. 

Order effects are notably captured in large-scale choice environments. Eye-tracking experiments show that users disproportionately trust and select results appearing early in Google searches, even when those results are objectively less relevant \citep{Pan2007}. Similarly, randomized variation in the ranking of products on Expedia shows that higher-ranked products attract greater consumer attention and clicks \citep{Ursu2018}. More directly regarding collective decision making, a large literature discusses ballot-order effects in elections \citep{millerkrosnick1998,koppellsteen2004,hoimai2008,meredithsalant2013}. 

Our work is further relevant for questions on \emph{agenda manipulation}, which ask how the order of collective comparisons can affect the outcome of group decision-making processes. Originating with \citet{farqu1969theoryofvoting} and \citet{black1958theoryofcommitteeselections}, this literature characterizes the alternatives that an agenda-setter can induce a committee to select under complete information \citep{miller1977graphtheoretical,banks1985sophisticated}, with subsequent extensions to richer settings \citep{barberagerber2017seqvotagenmanip,roessler2018collectivecommitment}. More recent work considers agenda manipulation and ballot design under incomplete information \citep{kleinermoldovanu2017contentbasedseqvoting,gershkovmoldovanushi2019votingwhattoputontheballot}. Our framework differs from this literature in that the presentation order may vary across voters, voters encounter individual alternatives rather than pairwise comparisons, and aggregation takes place only after all approval ballots have been formed. Thus, the planner's influence operates through the interaction between the design of the presentation procedure and order-dependent individual evaluations.

Finally, there is growing interest in the area of \emph{behavioral social choice}, which examines how cognitive constraints shape individual preferences and consequently collective outcomes. \citet{regenwetter2009behavioural} argue that standard models of social choice presuppose stable and context-independent evaluations, whereas empirical evidence consistently exhibits context sensitivity and framing effects. They call for theoretical frameworks that incorporate behavioral regularities. \citet{bonnefon2010behavioral} shows that even preferences over aggregation procedures are malleable. Using controlled experimental settings of judgment aggregation, he demonstrates that a simple positive or negative framing of the judgment at hand can reverse individuals' procedural preferences, which can in turn shift the collective outcome. \cite{Baujard2011} discuss how voting behavior depends on the particular electoral system in the context of French presidential elections. \citet{binder2015campaigns} provide complementary field evidence using a natural experiment in California ballot initiatives. They show that framing effects meaningfully alter electoral support for low-salience propositions, while vigorous campaign activity substantially attenuates these effects.

\paragraph{Roadmap.} 
Section~\ref{sec:model} introduces our proposed voting model that incorporates an anchoring mechanism.
Section~\ref{sec:anchorproof} presents an exhaustive analysis of the different variations of our central property---anchor-proofness---and provides respective impossibility and possibility results for approval-based voting rules.
Section~\ref{sec:strategy} explores the manipulation by a social planner under potentially limited information and obtains relevant characterization results.
Section~\ref{sec:conc} concludes.


\section{Model}\label{sec:model}

Let $N=\{1,\ldots,n\}$ be a finite set of \emph{voters} and $\alt$ a finite set of \emph{alternatives}, with $|\alt|=m >2$. For an arbitrary set $Y$, we denote by $L(Y)$ the set of all linear orders, i.e., complete and acyclic binary relations, over $Y$.  We call a linear order $\pref \in L(\alt)$ a \emph{ranking} over the set of alternatives $\alt$.
If $(x,y)\in \pref$,  we will say that alternative $x$ is preferred to $y$ in $\pref$. For simplicity, we often write $p=(p^1, p^2, \ldots, p^m)$; that is, $p^k$ is the alternative that appears in the $k^{\mathit{th}}$ position of the ranking.  The most preferred alternative in a ranking appears in position~1, and the least preferred alternative holds position~$m$.
Inspired by the preference-approval model introduced by \citet{BramsSanver2009}, each voter $i\in N$ is endowed with an intrinsic \emph{preference} over the set of alternatives:\footnote{For simplicity, we use the symbol $\pref$ to denote both a linear order and a preference; the intended meaning will be clear from the context.} 
$$\pref_i = (p_i^1,\ldots, p_i^m)$$  
A preference $\pref_i \in P(\alt)$ is a ranking over the alternatives associated with an \emph{acceptability threshold} $t_i \in \{1,\ldots, m\} $ that captures the maximum position that voter~$i$ considers acceptable for approving the corresponding alternative in the ranking.\footnote{We assume that at least one alternative is acceptable for each voter, which ensures that voters submit non-empty approval ballots. This is not crucial for our results.} Formally, $P(\alt)= L(\alt)\times \{1,\ldots, m\}$.
Whenever the acceptability threshold is not relevant, we will treat $\pref_i$  simply as a linear order. So, an alternative~$x$ with position $k$ in the ranking $\pref_i$ is \emph{acceptable} if $k \leq t_i$.  
Let $\mathit{ACC}(\pref_i) = \{\pref_i^1,\ldots,\pref_i^{t_i}\} \subseteq \alt$ be the set of acceptable alternatives by voter~$i$ with preference~$\pref_i$.  
Note that the most preferred alternative of a voter is always acceptable by that voter.

We denote by $\boldsymbol{\pref} = (\pref_1,\ldots, \pref_n)\in P(\alt)^N$  the profile of  preferences of all voters in $N$. 
When $\mathit{ACC}(\pref_i)= \alt$ (equivalently $t_i=m$), we say that the preference is \emph{tolerant}.
Moreover, if for all $i\in N$ it is the case that $\mathit{ACC}(\pref_i)= \alt$, we call the profile tolerant. 
We define $\mathit{tol}(\pref_i) \in P(\alt)$ to be the \emph{tolerant} version of $\pref_i$: the associated ranking of $\mathit{tol}(\pref_i)$ is the same as of $\pref_i$ and moreover all alternatives are acceptable in $\mathit{tol}(\pref_i)$.  
The tolerant version of a profile $\prof{\pref}=(\pref_1,\ldots, \pref_n)$ is $\prof{\mathit{tol}(\pref)} = (\mathit{tol}(\pref_1),\ldots, \mathit{tol}(\pref_n))\in P(\alt)^N$.
When $|\mathit{ACC}(\pref_i)|=1$ (equivalently $t_i=1$), we say the preference is \emph{intolerant}.
If for all $i\in N$ it is the case that $|\mathit{ACC}(\pref_i)|=1$, we call the profile intolerant.

Given a profile $\prof{\pref}$ and an alternative $x \in \alt$, we define the \emph{plurality} and the \emph{acceptability} points of $x$ respectively, as follows: 
$$ plur_{\prof{\pref}}(x)=|\{i\in N\mid x=\pref_i^1\}|$$
$$ acc_{\prof{\pref}}(x)=|\{i\in N\mid x\in \mathit{ACC}(p_i)\}|$$

Thus, the plurality points of $x$ equals to the number of voters that rank $x$ first in $\prof{\pref}$; while the acceptability points of $x$ equals the number of voters that accept $x$ in $\prof{\pref}$. We also use the following notation:
\[ \mathit{PLUR}(\prof{\pref}) =\{x\in \alt \mid x=\pref_i^1 \text{ for some } i\in N \}\]
\[ \mathit{ACC}(\prof{\pref}) =\{x\in \alt \mid x\in \mathit{ACC}(\pref_i) \text{ for some } i\in N \}\]
In words, $\mathit{PLUR}(\prof{\pref})$ captures the set of alternatives that are ranked first by at least one voter in $\prof{\pref}$, and $\mathit{ACC}(\prof{\pref})$  the set of alternatives that are acceptable by at least one voter in $\prof{\pref}$.

\subsection{Anchoring Mechanism}

Let $S(\alt)$ be the set of all permutations over the set of alternatives.\footnote{Technically, $S(\alt)$ is the same mathematical entity as $L(\alt)$, though we introduce separate notation to highlight the two conceptually different notions.} Each voter~$i\in N$ is shown the alternatives in steps, one at a time, according to the order prescribed by her corresponding permutation $\sigma_i$ in the order vector $\boldsymbol{\sigma} = (\sigma_1,\ldots, \sigma_n)\in S(\alt)^N$.
Given a voter~$i$ and an order~$\sigma_i$, let $\sigma_i^k\in \alt$ be the alternative that is presented to her in step~$k$. 
Let $\app$ be the set of all non-empty approval ballots over the set of alternatives---that is, $\app = 2^{\alt}\setminus \{\emptyset\}$.  
Voter $i$ with preference~$\pref_i$ creates an approval ballot $A_{\pref_i,\sigma_i} \in \app$ based on the order~$\sigma_i$ via the following iterative procedure, where $k > 1$:
\begin{center}
\begin{tcolorbox}[colframe=black, colback=white, boxrule=0.8pt, width=1.02\textwidth]
\begin{itemize}
    \item[ Step 1:] $A_{\pref_i,\sigma_i}^1=\{\sigma_i^1\} \cap \mathit{ACC}(\pref_i)$. 
    \item[ Step $k$:]
     $A_{\pref_i,\sigma_i}^k=\begin{cases}
         ( \{\sigma_i^k\} \cap ACC(p_i) ) \cup A^{k-1}_{\pref_i,\sigma_i} & \text{ if } (\sigma_i^k,x) \in \pref_i \text{ for all } x\in A_{\pref_i,\sigma_i}^{k-1}\\
         A_{\pref_i,\sigma_i}^{k-1} & \text{ otherwise }
     \end{cases}$
     \item[ Finally:] $A_{\pref_i,\sigma_i}=A_{\pref_i,\sigma_i}^m$
    \end{itemize}
\end{tcolorbox}
\end{center}

This means that a voter constructs her ballot by examining the alternatives sequentially, one at a time, and with commitment: once an alternative is approved, it cannot be revoked later. When the voter encounters the $k^{\text{th}}$ alternative in the presentation order, for every $k$, she exhibits an anchoring bias: she approves the alternative if and only if she deems it acceptable and she has not previously encountered another more preferred alternative. Said differently, at every step~$k$ the anchor is the best alternative encountered so far. It is important to note that voters here are assumed to be sincere---i.e., it is only their bias and not any form of incentive that affects their ballot formation.

Observe also that it is not the case that every approval ballot $A_{\pref_i,\sigma_i}$ formed by a preference~$\pref_i$  and an order~$\sigma_i$ is consistent, although this is common in prior literature \citep[for example,][]{BramsSanver2009}. Under the anchoring process, an alternative outside a voter's approval ballot may be more preferred by that voter to some alternative inside the ballot (see Example~\ref{ex: non-adm}).

\begin{exa} \label{ex: non-adm}
Consider a tolerant preference $\pref_i = (x,y,z)$ and an order $\sigma_i=(z,x,y)$. 
Then, $A_{\pref_i,\sigma_i}^1=\{z\}$ (because $z \in \mathit{ACC}(\pref_i)$),  $A_{\pref_i,\sigma_i}^2=\{x,z\}$ (because $(x,z) \in \pref_i$), and $A_{\pref_i,\sigma_i}^3= A_{\pref_i,\sigma_i}^2= \{ x,z\}$ (because $x \in A_{\pref_i,\sigma_i}^2$ but $(y,x) \notin \pref_i$). 
Thus, the voter approves $z$ but disapproves $y$ although she intrinsically prefers $y$ to $z$, generating the approval ballot $A_{\pref_i,\sigma_i}=\{ x,z\}$. 
This happens because $z$ is acceptable and is examined at the very beginning, so it is approved since no better anchor exists at the time; but the voter has already seen the more preferred alternative $x$ when $y$ is examined, so $y$ is disapproved since it is compared to a better anchor.
\end{exa}

We next introduce two particular orders that depend on a preference~$\pref_i$ of voter~$i$:
\begin{itemize}
    \item 
$\underline{\sigma_i}$ is the order that presents to voter~$i$ the alternatives from least to most preferred in $\pref_i$. Formally, $\underline{\sigma_i} = (p_i^m,p_i^{m-1},\ldots, p_i^1)$. Thus:
$$ A{\pref_i, \underline{\sigma_i}} = \mathit{ACC}(\pref_i)$$
    \item
$\overline{\sigma_i}$ is the order that presents to voter~$i$ the alternatives from most to least preferred in $\pref_i$. Formally, $\overline{\sigma_i} = (p_i^1,p_i^2,\ldots, p_i^m)$. Thus:
$$A{\pref_i, \overline{\sigma_i}} = \{\pref^1_i\} $$
\end{itemize}
Given a profile $\prof{\pref}$ and an order $\prof{\sigma}$, we denote with $app_{\prof{\pref},\prof{\sigma}}(x)$ the approval points of $x$ in $\prof{\pref}$ under the order $\prof{\sigma}$, that is: $$app_{\prof{\pref},\prof{\sigma}}(x)=|\{i\in N\mid x\in A_{p_i,\sigma_i}\}|$$
It is easy to see that $plur_{\prof{\pref}}(x)\leq app_{\prof{\pref},\prof{\sigma}}(x)\leq acc_{\prof{\pref}}(x)$ for any profile~$\prof{\pref}$ and order~$\prof{\sigma}$. 
Moreover, Remark~\ref{rem:power} highlights that our setting allows certain liberties with respect to the approval ballot formed by a voter~$i$. 
Specifically: first, any acceptable subset of alternatives that includes the voter's most preferred one can become the voter's approval ballot under some suitable presentation order; second, given a specific presentation order, any subset of alternatives can become the approval ballot of some voter with a suitable preference; and third, given a specific presentation order, any subset of alternatives that includes the one appearing first in the order can become the approval ballot of some voter with a suitable tolerant preference.

\begin{remark}\label{rem:power}
Let $A\subseteq\alt$ be a subset of alternatives.
Then the following hold:
\begin{itemize}
    \item for every preference $\pref_i \in P(\alt)$ there exists an order $\sigma_i$ such that: $$A_{\pref_i,\sigma_i}=(A\cap \mathit{ACC}(\pref_i))\cup \{\pref^1_i\}$$
    \item for every order $\sigma_i$ there exists a (potentially non-tolerant) preference $\pref_i \in P(\alt)$ such that: $$A_{\pref_i,\sigma_i}=A$$  
    \item for every order $\sigma_i$ there exists a tolerant preference $\pref_i \in P(\alt)$ such that: $$A_{\pref_i,\sigma_i}=A\cup \{\sigma_i^1\}$$ 
\end{itemize}
Thus, presentation orders have a clear power towards ballot formation.
 \end{remark}

A profile $\boldsymbol{A_{\pref,\sigma}}$ is a vector of approval ballots obtained for voters with intrinsic preferences $\boldsymbol{\pref}$ after the presentation of the alternatives following the order $\boldsymbol{\sigma}$:
\[\boldsymbol{A_{\pref,\sigma}}=(A_{\pref_1,\sigma_1},\ldots,A_{\pref_n,\sigma_n})\in\app^N\]
Given an approval profile~$\boldsymbol{A}$, we use $\boldsymbol{A}_{-i}$ to denote the partial profile where the approval ballot of voter $i$ is removed.

\subsection{Approval-based Voting Rules}
An approval-based voting rule $F:\app^N\rightarrow 2^{\alt}$, takes a profile of approval ballots and outputs a set of winning alternatives. For the remainder of the paper, we will simply refer to a \emph{voting rule}, assuming that its inputs are profiles of approval ballots. The canonical such voting rule is standard \emph{approval voting} (\text{AV}), which selects as winners the alternatives with the highest approval count in a profile $\prof{A}=(A_1,\ldots,A_n)$. Concretely:
\[\text{\text{AV}}(\prof{A})= argmax_{x\in \alt}|\{i\in N \mid x\in A_i\}|\]
Another simple voting rule is what we call \emph{nomination} (and can also be thought off as a \emph{quota-1} rule): An alternative is a winner if and only if it is `nominated', i.e., approved by at least one voter.
Formally:
$$\text{Nom}(\prof{A})=\{x\in\alt\mid x\in A_i\}$$

Our main question is thus phrased as follows:
\begin{quote}
    How does the order in which voters examine alternatives influence the outcome of a voting rule?
\end{quote}

Said differently, we examine how the voters' anchoring bias might change the collective outcome since different approval ballots can be produced from the same underlying preference.
Note that in some extreme cases, the formation of an approval ballot is not that affected by the presentation order of the alternatives: for example, this is the case when the voter's preference is intolerant and only her favorite alternative is acceptable and can potentially be approved.
On the other side, when $\sigma_i^1=p^1_i$, the voter encounters her favorite alternative first, which becomes the anchor and no other alternative is approved, making the approval ballot heavily influenced by the presentation order.
This implies Remark~\ref{rem:top-prof}.

\begin{remark} \label{rem:top-prof}
The following observations are in order.
\begin{itemize}
    \item For every preference profile~$\prof{\pref}$ and two orders~$\prof{\sigma}$ and $\prof{\pi}$ such that $\sigma_i^1=\pi_i^1=p^1_i$ for all voters $i\in N$, it holds that $\prof{A_{\pref,\sigma}} =  \prof{A_{\pref, \pi}}$.
    \item For every preference profile~$\prof{\pref}$ such that $t_i = 1$ for all $i\in N$, it holds that $\prof{A_{\pref,\sigma}} =  \prof{A_{\pref, \pi}}$ for all orders $\prof{\sigma}$ and $\prof{\pi}$.
\end{itemize}
We thus have two invariance conditions regarding the presentation orders.
\end{remark}

Remark~\ref{rem:top-prof} obviously provides stability conditions not only for the individual ballots but also for the voting outcome.

\subsection{Anchor-Proofness and Other Properties}

We proceed with the general definition of \emph{anchor-proofness}, i.e., the robustness of voting outcomes with respect to the alternatives' presentation orders.
\begin{defn}
We call a voting rule $F$ \textbf{anchor-proof for  a profile~$\prof{p}$ } if $F(\prof{A_{\pref,\sigma}})=F(\prof{A_{\pref, \pi}})$ for every two orders $\prof{\sigma}, \prof{\pi} \in S(\alt)^n$.
\end{defn}
We call a voting rule $F$ \emph{anchor-proof} if $F$ is anchor-proof for every profile~$\prof{p}$.
Next, we propose a number of other properties (also referred to as \emph{axioms}), which represent desirable features of approval-based voting rules. 
The first one states that the names of the voters should not be considered important in selecting a winning alternative.
\begin{defn}
We say that $F$ is \textbf{anonymous} if for every approval profile $\prof{A} = (A_1,\ldots, A_n) \in \app^N$ and permutation $\lambda:N\rightarrow N$, it holds that $F(A_{\lambda(1)},\ldots,A_{\lambda(n)})=F(\prof{A})$.    
\end{defn}
The next basic axiom states that alternatives should be treated symmetrically by the voting rule.
\begin{defn}
We say that $F$ is \textbf{neutral} if for every  approval profile $\prof{A} = (A_1,\ldots, A_n) \in \app^N$ and permutation $\mu:\alt\rightarrow\alt$ where $\mu(A)=\{\mu(x)\mid x\in A\}$,  it holds that $F(\mu(A_1),\ldots,\mu(A_n))=\mu(F(\prof{A}))$.   
\end{defn}
The following three axioms concern different degrees of consensus that a voting rule should respect. Weak unanimity demands that winners should be selected among the unanimously approved alternatives, if there are any; total unanimity refers to a particular profile where all alternatives are unanimously approved, and suggests that they should all win in that case; unanimity states that winners should be exactly the unanimously approved alternatives, if there are any.

\begin{defn}
We say that $F$ satisfies \textbf{weak unanimity} if for every approval profile~$\prof{A}$ such that $U=\{x\in \alt: x\in A_i \text{ for all } i \in N\} \neq \emptyset$, it holds that $F(\prof{A}) \subseteq U$.    
\end{defn}
\begin{defn}
We say that $F$ satisfies \textbf{total unanimity} if for every approval profile~$\prof{A}$ where $A_i=X$ for all $i\in N$, it holds that $F(\prof{A})=X$.   
\end{defn}
\begin{defn}
We say that $F$ satisfies \textbf{unanimity} if for every approval profile~$\prof{A}$ such that $U=\{x\in \alt: x\in A_i \text{ for all } i \in N\} \neq \emptyset$, it holds that $ F(\prof{A}) = U$. 
\end{defn}

Note that total unanimity is implied by anonymity together with neutrality. Moreover, weak unanimity is a weaker axiom than the Pareto condition proposed by \citet{brandl2022approval} for approval-based rules.
The standard approval voting rule (\text{AV}) satisfies unanimity and thus also satisfies weak and total unanimity. 
The nomination rule (Nom) satisfies total unanimity but violates weak unanimity and unanimity.


\section{Anchor-proofness}\label{sec:anchorproof}

This section investigates questions of \emph{anchor-proofness}: conditions under which the outcome of a voting rule remains invariant to the presentation order of the alternatives.
Our findings can be organized around six sub-questions (summarized in Figure~\ref{fig:summary}). 
Fixing a voting rule, we ask whether it is the case that:
\begin{enumerate}
\item for every ranking profile, the approval winners are independent of the presentation order;
\item there exists a ranking profile for which  the approval winners are independent of the presentation order;
\item there exist two distinct presentation orders that produce the same approval winners for every ranking profile;
\item for every ranking profile there exist two distinct presentation orders that produce the same approval winners;
\item for every two distinct presentation orders there exists a ranking profile for which the approval winners coincide;
\item there exist a ranking profile and two distinct presentation orders for which the approval winners coincide.
\end{enumerate}

Together, these six questions form a complete landscape for assessing qualitative aspects of anchor-proofness: any meaningful inquiry into whether a profile’s winners depend on the presentation order can be stated as one of them.
Clearly, the first question is the strongest and most demanding (and indeed has a negative answer for all non-constant rules), while the last question is the weakest and has a trivial positive answer for all rules. 
The intermediate questions (2–5) are not logically related. 
Note also that questions (4-6) are included mostly for conceptual completeness and are of limited practical interest, since they require information about a concrete preference profile (resp.\ a presentation order) to construct a suitable presentation order (resp.\ preference profile).
The topic of information needed to influence the voting outcome will be revisited in Section~\ref{sec:strategy}.

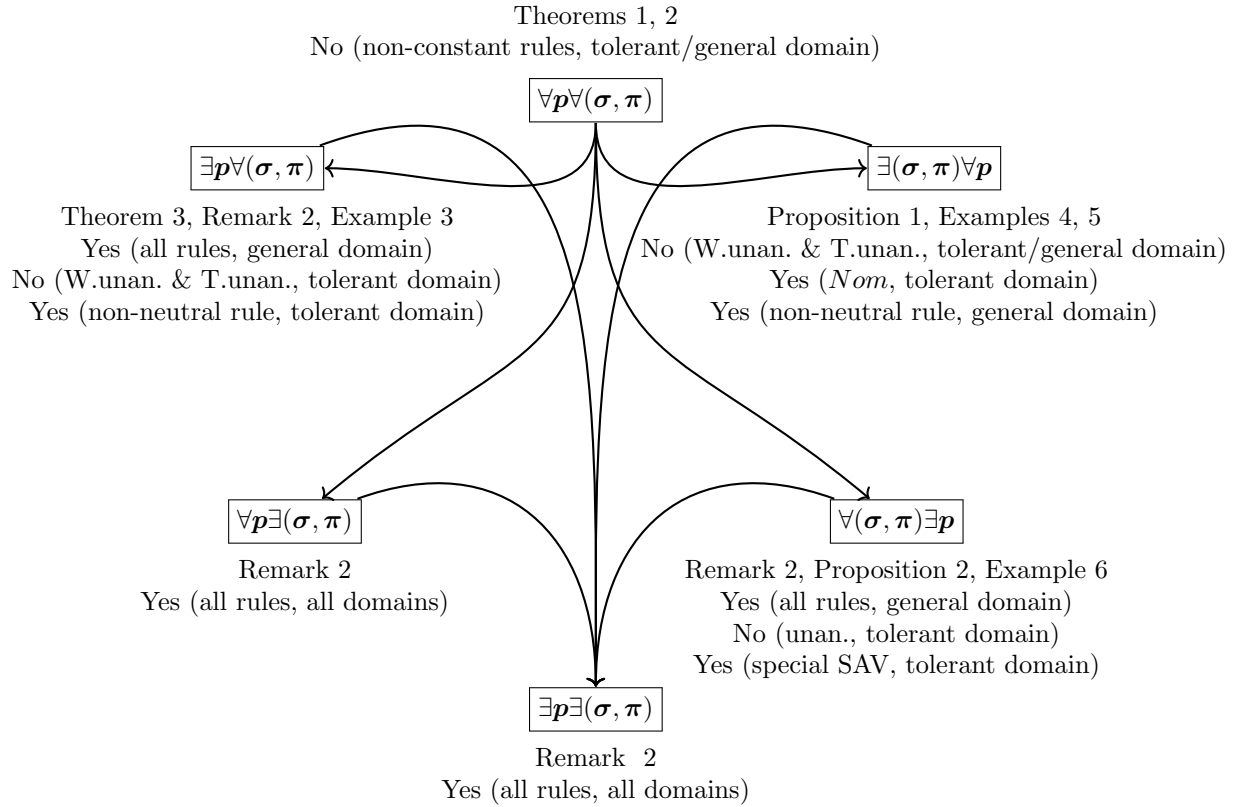
\begin{figure}[htbp]
\centering
\begin{tikzpicture}[
  every node/.style={draw=none, align=center, inner sep=0pt},
]

\useasboundingbox (-6.3, -10.3) rectangle (7.3, 1.7);

\node (top) {\fbox{$\forall \prof{p} \forall (\prof{\sigma},\prof{\pi})$}};
\node[above=0.2cm of top] { Theorem~\ref{thm:mainnonconstant} \\
No (non-constant rules, tolerant/general domain)};

\node (mid1) [below=0.3cm of top, xshift=-4.5cm] {\fbox{$\exists \prof{p} \forall (\prof{\sigma},\prof{\pi})$}};
\node[below=0.2cm of mid1] { Theorem~\ref{thm:imposs-invariant-prof},
Remark~\ref{rem:top-prof}, Example~\ref{ex: a profile for every pair of orders}\\
Yes (all rules, general domain)\\
No (W.unan.\ $\&$ T.unan., tolerant domain)\\
Yes (non-neutral rule, tolerant domain)};

\node (mid2) [below=5cm of top, xshift=-4cm]               {\fbox{$\forall \prof{p} \exists (\prof{\sigma},\prof{\pi})$}};
\node[below=0.2cm of mid2] {  Remark~\ref{rem:top-prof} \\
Yes (all rules, all domains)};

\node (mid3) [below=5cm of top, xshift=4cm]              {\fbox{$\forall (\prof{\sigma},\prof{\pi}) \exists\prof{p}$}};
\node[below=0.2cm of mid3] {Remark~\ref{rem:top-prof}, Proposition~\ref{prop:2inv-prof}, Example~\ref{ex:exrule-poss}\\
Yes (all rules, general domain)\\
No (unan., tolerant domain)\\
Yes (special \text{AV}, tolerant domain)
};

\node (mid4) [below=0.3cm of top, xshift=4.5cm]{\fbox{$ 
 \exists (\prof{\sigma},\prof{\pi}) \forall \prof{p}$}};
\node[below=0.2cm of mid4] { Proposition~\ref{propi:impos-orderforallprofiles}, Examples~\ref{ex:egal-poss}, \ref{ex:non-neutr-nomin} \\
No (W.unan.\ $\&$ T.unan., tolerant/general domain)\\
Yes (Nom, tolerant domain)\\
Yes (non-neutral rule, general domain)
}; 
 
\node (bottom) [below=7.5cm of top] {\fbox{$
\exists \prof{p} \exists (\prof{\sigma},\prof{\pi})$}};
\node[below=0.2cm of bottom] { Remark ~\ref{rem:top-prof}\\
Yes (all rules, all domains)};


\draw[->, thick] (mid1) to[out=-340, in=90, looseness=1.3] (bottom);
\draw[->, thick] (mid2) to[out=-340, in=90, looseness=1.3] (bottom);
\draw[->, thick] (mid3) to[out=-200, in=90, looseness=1.3] (bottom);
\draw[->, thick] (mid4) to[out=-200, in=90, looseness=1.3] (bottom);

\draw[->, thick] (top) to[out=-90, in=0, looseness=1] (mid1);
\draw[->, thick] (top) to[out=-90, in=180, looseness=1] (mid4);
\draw[->, thick] (top) to[out=-90, in=40, looseness=1.4] (mid2);
\draw[->, thick] (top) to[out=-90, in=140, looseness=1.4] (mid3);

\end{tikzpicture}

\caption{Summary of our results. The quantifiers in each frame show the conditions under which we demand that $F(\prof{A_{p,\sigma}}) = F(\prof{A_{p,\pi}})$, always assuming that $\prof{\sigma} \neq \prof{\pi}$. A `Yes' (resp.\ `No') answer means that the corresponding condition can (resp.\ cannot) be satisfied.}
\label{fig:summary}
\end{figure}

\subsection{Anchor-Proofness for All Profiles}\label{sec:anchorproofforallprofile}

We start the presentation of our main result with Theorem~\ref{thm:mainnonconstant}, showing that anchor-proofness is impossible for non-constant voting rules. 
That is, the only rules that are anchor-proof are such that $F(\boldsymbol{A_{\pref,\sigma}}) = F(\boldsymbol{A_{q,\pi}})$ for all preference profiles $\prof{p}$ and $\prof{q}$ and all orders $\prof{\sigma}$ and $\prof{\pi}$.

\begin{teo}\label{thm:mainnonconstant}
Let $F$ be a voting rule that is anchor-proof.
Then, $F$ is constant. 
\end{teo}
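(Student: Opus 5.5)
The plan is to show that $F(\prof{A}) = F(\prof{B})$ for any two approval profiles $\prof{A},\prof{B} \in \app^N$. Once this holds, $F$ is constant on its whole domain $\app^N$, and in particular on every profile of the form $\prof{A_{\pref,\sigma}}$. The natural bridge is a chain of profiles in which consecutive members differ in the ballot of a single voter. Anchor-proofness only constrains $F$ on the ballots a fixed preference profile can generate, so it suffices to connect consecutive profiles in the chain through a common underlying preference profile.

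\emph{Step 1 (a single voter's ballot can be changed freely).} Fix $i \in N$, the ballots $\prof{A}_{-i}$ of the other voters, and two ballots $A, A' \in \app$. We want $F(A,\prof{A}_{-i}) = F(A',\prof{A}_{-i})$. By the second item of Remark~\ref{rem:power}, for each $j \neq i$ and any fixed order we can choose a preference $\pref_j$ and an order $\sigma_j$ with $A_{\pref_j,\sigma_j} = A_j$. For voter $i$, take a preference $\pref_i$ whose ranking places the alternatives of $A$ first, then those of $A'\setminus A$, then the rest. Set $t_i = |A\cup A'|$, so that $\mathit{ACC}(\pref_i) = A\cup A'$. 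By the first item of Remark~\ref{rem:power}, the set of ballots reachable from $\pref_i$ is exactly $\{B \subseteq \mathit{ACC}(\pref_i) : \pref_i^1 \in B\}$. The constraint is that both $A$ and $A'$ must contain the top alternative $\pref_i^1$.

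\emph{Step 2 (handling the top-alternative constraint).} This is the main obstacle, since $A$ and $A'$ may be disjoint. The fix is to route through $A\cup A'$. Choose $\pref_i$ with top element in $A$ and $\mathit{ACC}(\pref_i) = A\cup A'$. Then both $A$ and $A\cup A'$ are reachable, so $F(A,\prof{A}_{-i}) = F(A\cup A',\prof{A}_{-i})$. Symmetrically, choose a preference with top element in $A'$ and the same acceptable set; this gives $F(A',\prof{A}_{-i}) = F(A\cup A',\prof{A}_{-i})$. In each application the other voters keep fixed preferences and orders, so anchor-proofness for that profile applies to the two order vectors that differ only in $\sigma_i$. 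Combining the two equalities yields $F(A,\prof{A}_{-i}) = F(A',\prof{A}_{-i})$.

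\emph{Step 3 (chaining).} Starting from $\prof{A}$, we replace the ballots one voter at a time by those of $\prof{B}$, using Step 2 at each replacement. After $n$ steps this gives $F(\prof{A}) = F(\prof{B})$, so $F$ is constant. The same argument also works within the tolerant domain. There we use the third item of Remark~\ref{rem:power} for the other voters, whose ballots then only need to be generable together with a suitable first-presented alternative. For voter $i$, we replace the threshold choice with a tolerant ranking and a presentation order that first shows the least-preferred alternatives of the target set. The only delicate point to check is that the tolerant voter $i$ can reach both $A$ and $A\cup A'$. This holds because, under a suitable order, any set containing $\pref_i^1$ is reachable once all alternatives are acceptable.
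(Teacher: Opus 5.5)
Your proof is correct and is essentially the paper's argument run directly rather than by contradiction. The paper uses the same ingredients: a one-voter-at-a-time hybrid chain (used there to extract a pivotal voter), the union trick of passing from possibly disjoint ballots to $A\cup A'$, and Remark~\ref{rem:power} to realize both ballots from a single preference whose top lies in their intersection. Your observation that the argument also works on the tolerant domain matches the paper's comment following its proof.
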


\begin{proof}
First, we show that for every non-constant voting rule there exists a voter~$i$, a (partial) approval profile $\prof{A}_{-i}$, and two approval ballots $A_i$ and $A_{i}'$ such that $F(\prof{A}_{-i},A_i)\neq F(\prof{A}_{-i},A'_i)$.
Given $F$, we call this voter a pivotal voter for $\prof{A}_{-i}$.
Indeed, as $F$ is non-constant, there are two different approval profiles $\prof{B}$ and $\prof{C}$ such that $F(\prof{B})\neq F(\prof{C})$.
Starting from the profile $\prof{B}$, in steps for each voter replace $B_i$ with $C_i$ until the outcome of $F(\prof{B})$ changes.
By the hypothesis, this must occur at some point.
Let $i^\star$ be the voter corresponding to this step, and let  $\prof{A}_{-i^\star}$ be the profile such that for all $i<i^\star$, $A_i=B_i$, and for all $i^\star<i$, $A_i=C_i$. 
So $i^\star$ is a pivotal voter for $\prof{A}_{-i^\star}$.

Now we proceed to show that a non-constant rule $F$ cannot be anchor-proof.
Fix such a rule $F$ and let $i^\star$ be a pivotal voter for $\prof{A}_{-i^\star}$. Let also $A_{i^\star}$ and $A'_{i^\star}$ be two approval ballots such that $F(\prof{A}_{-i^\star},A_{i^\star})\neq F(\prof{A}_{-i^\star},A'_{i^\star})$, and such that $A_{i^\star}\cap A'_{i^\star}\neq\emptyset$.
It is always possible to find two such approval ballots with non-empty intersection, as the following reasoning shows. 
Let $B_{i^\star}$ and $B'_{i^\star}$ be two approval ballots such that $F(\prof{A}_{-i^\star},B_{i^\star})\neq F(\prof{A}_{-i^\star},B'_{i^\star})$ but $B_{i^\star}\cap B'_{i^\star}=\emptyset$.
We can consider $C_{i^\star}=B_{i^\star}\cup B'_{i^\star}$.
Then, it holds that either $F(\prof{A}_{-i^\star},C_{i^\star})\neq F(\prof{A}_{-i^\star},B_{i^\star})$ or $F(\prof{A}_{-i^\star},C_{i^\star})\neq F(\prof{A}_{-i^\star},B'_{i^\star})$, because we know that $F(\prof{A}_{-i^\star},B_{i^\star})\neq F(\prof{A}_{-i^\star},B'_{i^\star})$.
Suppose without loss of generality that the latter holds; then, we can choose $A_{i^\star}=C_{i^\star}$ and $A'_{i^\star}=B_{i^\star}$ that fulfill the required conditions.
Now we choose a preference $p_{i^\star}$ for voter $i^\star$  such that there are two different orders $\sigma_i$ and $\pi_i$ with $A_{p_{i^\star},\sigma_i}=A_{i^\star}$ and $A_{p_{i^\star},\pi_i}=A'_{i^\star}$.
This is possible due to the non-empty intersection and Remark~\ref{rem:power}.
Let $\prof{\sigma}$ and $\prof{\pi}$ be two orders such that for all $i\neq i^\star$, $\sigma_i=\pi_i$, and $\prof{p}$ be a preference profile such that $A_{p_{i},\sigma_i}=A_i$.
Then $F(\boldsymbol{A_{\pref,\sigma}}) \neq F(\boldsymbol{A_{p,\pi}})$, contradicting anchor-proofness.
\end{proof}

Thus, only constant rules are robbust to the alternatives' presentation order.
Notably, this holds because the social planner might be able to carefully select an order that takes into account specific information about the voters' preferences, including in particular voters who are pivotal to the collective outcome.
We will see in Section~\ref{sec:strategy} that possibilities arise when such information is not available.

Note that the proof of Theorem~\ref{thm:mainnonconstant} shows that even if there is a single biased voter, whenever there is some instance where this voter is pivotal, then our voting rule will violate anchor-proofness. 
This is always the case for non-constant rules.
Moreover, this result holds even for tolerant profiles; so even for this restricted domain, non-constant voting rules fail to be anchor-proof.

\subsection{Existence of an Anchor-Proof Profile}\label{sec:ExistenceofanAnchorProofProfile}

Having addressed the first question in our list about anchor-proofness, we move on to the second question that asks for, more conservatively,  the existence of \emph{some} profile for which anchor-proofness is achievable. 
By the second point of Remark~\ref{rem:top-prof}, we know that such profiles exist: in particular, every intolerant profile does the job, independently of the voting rule. 
Theorem~\ref{thm:imposs-invariant-prof} shows that this ceases to be possible if we look only within the domain of tolerant profiles and apply a rule satisfying the reasonable axioms of weak and total unanimity.

\begin{teo} \label{thm:imposs-invariant-prof}
Let $F$ be a voting rule that satisfies weak and total unanimity. 
Then, $F$ is not anchor-proof for any tolerant profile~$\prof{\pref}$.
\end{teo}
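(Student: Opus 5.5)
Fix an arbitrary tolerant profile $\prof{\pref}$; we exhibit two order vectors giving different outcomes. The natural pair is $\underline{\prof{\sigma}}=(\underline{\sigma_1},\ldots,\underline{\sigma_n})$ and $\overline{\prof{\sigma}}=(\overline{\sigma_1},\ldots,\overline{\sigma_n})$. Since every voter is tolerant, under $\underline{\prof{\sigma}}$ each voter approves $\mathit{ACC}(\pref_i)=\alt$. Total unanimity then gives $F(\prof{A_{\pref,\underline{\sigma}}})=\alt$. Under $\overline{\prof{\sigma}}$ each voter approves only her top alternative $\pref_i^1$. We need an order vector whose outcome is a proper subset of $\alt$; weak unanimity is the tool for this.

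The key construction builds, for the given tolerant profile, an order vector $\prof{\pi}$ under which some alternative is unanimously approved but not every alternative is. By weak unanimity, $F(\prof{A_{\pref,\pi}})\subseteq U\subsetneq\alt$, which differs from $\alt$ and contradicts anchor-proofness. To do this, fix any alternative $x\in\alt$ and let each voter $i$ see $x$ first, then $\pref_i^1$ (if $\pref_i^1\neq x$), then the rest in arbitrary order. Each voter approves $x$ at step 1 since it is acceptable. She then approves $\pref_i^1$, and nothing afterwards because $\pref_i^1$ becomes the anchor. So $A_{\pref_i,\pi_i}=\{x,\pref_i^1\}$, and $x$ lies in the unanimity set $U$. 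Each ballot has at most two elements, and $m>2$, so some alternative is unapproved by everyone. Hence $U\neq\alt$. Alternatively, one can simply use Remark~\ref{rem:power} (third bullet) to get ballots $\{x\}\cup\{\pref_i^1\}$.

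This gives $F(\prof{A_{\pref,\underline{\sigma}}})=\alt\neq F(\prof{A_{\pref,\pi}})$, so $F$ is not anchor-proof for $\prof{\pref}$. The main obstacle is that weak unanimity constrains only profiles with a nonempty unanimity set. Under $\overline{\prof{\sigma}}$ alone the tops may all differ, leaving $U$ empty, and the rule could output $\alt$ there too. The step above avoids this by forcing a common first-presented alternative. The other delicate point is ensuring $U\neq\alt$, which is where the assumption $m>2$ enters.
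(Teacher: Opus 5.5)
Your proof is correct and is essentially the paper's argument. Total unanimity at $\underline{\prof{\sigma}}$ forces the outcome $\alt$, and a common first-presented alternative makes weak unanimity confine the outcome to a proper subset; the paper takes $x$ to be some voter's top, so that voter's ballot is $\{x\}$ and the rest of the order needs no control. Two small fixes: your claim that some alternative is unapproved by \emph{everyone} is false in general (the union of the ballots $\{x,\pref_i^1\}$ can be all of $\alt$), though you only need $U\subseteq A_{\pref_1,\pi_1}$ with $|A_{\pref_1,\pi_1}|\le 2<m$, and the relevant item of Remark~\ref{rem:power} is the first bullet, not the third.
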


\begin{proof}
Consider an arbitrary tolerant profile~$\prof{\pref}$ and suppose for contradiction that $F(\prof{A_{\pref,\sigma}})=F(\prof{A_{\pref,\pi}})$ for all orders $\prof{\sigma}, \prof{\pi}$. Then, it must hold that $$F(\prof{A_{\pref,\sigma}})=F(\prof{A_{\pref,\underline{\sigma}}})=F(X,\ldots,X)$$
From total unanimity, this means that $F(\prof{A_{\pref,\sigma}})=\alt$ for all orders $\prof{\sigma}$. 
We distinguish two cases.

First, suppose that all preferences in~$\prof{\pref}$ have the same top alternative~$x$, i.e., $\pref_i^1=x$ for all voters~$i\in N$. 
Then, take the order~$\prof{\sigma}$ to be such that $\sigma_i^1=x$ for all $i\in N$. 
This means that $\prof{A_{\pref,\sigma}}= (\{x\},\ldots,\{x\})$, so weak unanimity applies for some alternative~$y\neq x$ implying that $y\notin F(\prof{A_{\pref,\sigma}})= \alt$. 
We have reached a contradiction.

Second, suppose that there exist two voters $r \neq j$ and two alternatives $x\neq y$ such that $\pref_r^1=x \neq y = \pref_j^1$. 
Then, consider again the order~$\prof{\sigma}$ such that $\sigma_i^1=x$ for all $i\in N$. 
This means that $x\in A_{\pref_i,\sigma_i}$ for all $i\in N$ and $y\notin A_{\pref_r,\sigma_r}=\{x\}$. 
By weak unanimity, we have that $y\notin F(\prof{A_{\pref,\sigma}})= \alt$. 
We have reached a contradiction.
\end{proof}

However, the impossibility of Theorem~\ref{thm:imposs-invariant-prof} for the tolerant domain is not generalizable to arbitrary voting rules: Example~\ref{ex: a profile for every pair of orders} illustrates that there exist non-neutral rules that are anchor-proof for tolerant profiles. 

\begin{exa}\label{ex: a profile for every pair of orders}
Consider the following voting rule $F$. Fix an alternative $x\in \alt$.
Then, for an arbitrary approval profile $\prof{A}=(A_1,\ldots,A_n)$ define:
\[ F(\prof{A}) = \begin{cases}
    \{x\} & \text{ if } x\in A_i \text{ for all } i \in N \\
    \alt & \text{ otherwise }
\end{cases} \]
Now for the profile $\prof{\pref}$ such that $p^1_i=x$ for all $i\in N$ it holds that $F(\prof{A_{\pref,\sigma}})=F(\prof{A_{\pref,\pi}})$ for all orders $\prof{\sigma}$ and $\prof{\pi}$. 
Note that this rule violates neutrality as well as weak and total unanimity.
\end{exa}

As in the previous section, here as well, utilizing crucial information about the voters' preferences by the social planner makes natural voting rules fail anchor-proofness. 
Considering for instance the setting of participatory budgeting, if all projects are so good such that voters accept them all, then there is no escape from the potential influence of a social planner.
However, without such unanimity requirements, we are able to find rules which in certain situations are immune to anchor biases.

\subsection{Order Pairs Preserving Outcomes for All Profiles}\label{sec:OrderPairsPreservingOutcomesforAllProfiles}

We continue with the third question in our list, specifying the existence of a pair of orders producing the same winners for all profiles. This is clearly a weaker requirement than anchor-proofness, which spans across all orders.
Proposition~\ref{propi:impos-orderforallprofiles} (utilizing Lemma~\ref{lem:order-switch}) proves that such a pair of orders does not exist if we use a rule satisfying weak and total unanimity, even if we restrict attention to the domain of all tolerant profiles. Examples~\ref{ex:egal-poss} and~\ref{ex:non-neutr-nomin} show that the impossibility is circumvented in the restricted domain of tolerant profiles or, respectively, in the general domain of all preference profiles, if we consider the nomination rule (which, recall, fails weak unanimity) or, respectively, a rule that violates neutrality. 
Remark~\ref{rem:nomin-nontol} explains that the result for the nomination rule does not extend to the general domain.

\begin{lemma} \label{lem:order-switch}
Consider two orders $\sigma_i$ and $\pi_i$ and two preferences $p_i$ and $p'_i$ such that the following three conditions hold:
\begin{itemize}
    \item[$(i)$] $A_{p_i,\sigma_i} = \alt$;
    \item[$(ii)$] $A_{p_i,\pi_i} \neq \alt$;
     \item[$(iii)$]  $A_{p_i,\pi_i} \subseteq A_{p'_i,\sigma_i}$.
\end{itemize}
Then, we have that $A_{p'_i,\pi_i} = A_{p_i,\pi_i}$.
\end{lemma}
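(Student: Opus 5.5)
The plan is to make both ballots explicit. The key facts are that condition $(i)$ forces $\sigma_i$ to present the alternatives in increasing $\pref_i$-order, and that under $\sigma_i$ a ballot in $A_{p'_i,\sigma_i}$ dominates everything acceptable before it. Then I run the anchoring procedure for $p'_i$ along $\pi_i$ and show by induction that it approves exactly the elements of $A_{p_i,\pi_i}$.

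\emph{Step 1 (what $(i)$ says).} If $A_{p_i,\sigma_i}=\alt$, every alternative is acceptable, so $p_i$ is tolerant. Moreover every alternative is added at its step. By the definition of the procedure, $\sigma_i^k$ is then $\pref_i$-preferred to all of $\sigma_i^1,\ldots,\sigma_i^{k-1}$. Hence $\sigma_i=\underline{\sigma_i}$, and the $\pref_i$-ranking of two alternatives coincides with their order of appearance in $\sigma_i$.

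Because $p_i$ is tolerant, $A_{p_i,\pi_i}$ is the set of left-to-right maxima of $\pi_i$ with respect to $\pref_i$. Equivalently, it is the set of elements of $\pi_i$ whose $\sigma_i$-position exceeds that of every earlier element of $\pi_i$. Write these as $b_1,\ldots,b_r$ in $\pi_i$-order, with $b_1=\pi_i^1$.

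\emph{Step 2 (domination lemma).} I will show that if $b\in A_{p'_i,\sigma_i}$ and $y\in \mathit{ACC}(p'_i)$ appears before $b$ in $\sigma_i$, then $(b,y)\in p'_i$. The argument is by induction along $\sigma_i$. If $y$ was approved, $b$ beats it directly by the approval rule. If $y$ was acceptable but rejected, some earlier approved $z$ beats $y$, and $b$ beats $z$, so transitivity gives the claim.

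Two consequences follow.
\begin{itemize}
\item Since each $b_j\in A_{p'_i,\sigma_i}$ by $(iii)$, each $b_j$ is acceptable for $p'_i$.
\item Since $b_j$ precedes $b_{j+1}$ in $\sigma_i$ (by Step 1), we get $b_{r}\succ\cdots\succ b_1$ in $p'_i$.
\end{itemize}

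\emph{Step 3 (induction along $\pi_i$).} I claim that when voter $i$ with preference $p'_i$ reaches $b_j$ in $\pi_i$, the current approved set is $\{b_1,\ldots,b_{j-1}\}$, and $b_j$ is then approved.
\begin{itemize}
\item \emph{Base case.} $b_1=\pi_i^1$ is acceptable for $p'_i$, so it is approved at step 1.
\item \emph{Approving $b_j$.} When $b_j$ arrives, it is acceptable and $p'_i$-beats $b_{j-1}$, hence by the chain from Step 2 it beats every approved element. So it is added.
\item \emph{Rejecting the elements in between.} Take any $y$ strictly between $b_j$ and $b_{j+1}$ in $\pi_i$, or after $b_r$. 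By Step 1, $y$ is not a record, so $y$ precedes $b_j$ in $\sigma_i$. If $y\notin \mathit{ACC}(p'_i)$, it is never approved. Otherwise the domination lemma gives $(b_j,y)\in p'_i$, and $b_j$ is already approved, so $y$ is rejected.
\end{itemize}
Hence $A_{p'_i,\pi_i}=\{b_1,\ldots,b_r\}=A_{p_i,\pi_i}$. Condition $(ii)$ is not needed for this inclusion argument; it only makes the lemma non-vacuous in later applications.

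The main obstacle is Step 2. The anchor in the procedure is the best \emph{approved} element, not the best seen element, and $p'_i$ need not be tolerant. I therefore have to handle carefully the rejected-but-acceptable and unacceptable cases. The transitivity chain through the approved anchor is what makes this work.
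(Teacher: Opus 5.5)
Your proof is correct and rests on the same ingredients as the paper's: condition $(i)$ makes $\sigma_i$ the worst-to-best order of $p_i$, so $A_{p_i,\pi_i}$ is the set of $p_i$-records of $\pi_i$, and every element of $A_{p'_i,\sigma_i}$ beats in $p'_i$ the acceptable alternatives shown before it. You organize these as a forward induction along $\pi_i$ rather than the paper's two inclusions by contradiction; your explicit domination lemma justifies a step the paper uses only tacitly (formally the anchor is the best \emph{approved} alternative, not the best seen one), and you are right that condition $(ii)$ is not needed.
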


\begin{proof}
Let us first show that $A_{p'_i,\pi_i} \subseteq A_{p_i,\pi_i}$. Aiming for a contradiction, suppose that there exists an alternative $x \in \alt$ such that $x \in A_{p'_i,\pi_i} $ but $x\notin A_{p_i,\pi_i}$. The fact that $x\notin A_{p_i,\pi_i}$ implies the existence of another alternative $a\in A_{p_i,\pi_i}$ such that:
\[ (a,x) \in p_i \quad \text{and} \quad (a,x) \in \pi_i\]
Now, $x \in A_{p'_i,\pi_i} $ and $(a,x) \in \pi_i$ together imply that:
\[(x,a) \in p'_i\]
But by $(a,x) \in p_i$ and knowing from the hypothesis $(i)$ that $A_{p_i,\sigma_i}=\alt$, we have that:
\[(x,a) \in \sigma_i\]
This, together with the fact that $(x,a) \in p'_i$, implies that $a\notin A_{p'_i, \sigma_i} \supseteq A_{p_i,\pi_i}$, which is a contradiction since $a\in A_{p_i,\pi_i}$.

Next, let us show that $A_{p_i,\pi_i} \subseteq A_{p'_i,\pi_i}$. Aiming for a contradiction, suppose that there exists an alternative $y \in \alt$ such that $y \in A_{p_i,\pi_i} \subseteq A_{p'_i,\sigma_i}$ (so also $y \in A_{p'_i,\sigma_i}$),  but $y\notin A_{p'_i,\pi_i} $. Now, from the hypotheses $(i)$ and $(ii)$, it must be the case that any two alternatives $a,b\in A$  are ordered in the same way by $\sigma_i$ and by $\pi_i$: that is, $(a,b) \in \sigma_i$ if and only if $(a,b) \in \pi_i$ (otherwise,  to have that $A_{p_i,\sigma_i} = \alt$, one of the alternatives $a,b$ would not be included in $A_{p_i,\pi_i}$). So, $y \in A_{p'_i,\sigma_i}$ means that:
\[\text{for all } a\in A \text{ such that } (a,y)\in p'_i \text{  it holds that } (y,a) \in \sigma_i\cap \pi_i \]
But since we have that  $y\notin A_{p'_i,\pi_i} $, there must exist another alternative $x\notin A_{p_i,\pi_i}$ such that $(x,y) \in p'_i$ and $x \in A_{p'_i,\pi_i} $ (where $(x,y) \in \pi_i$). This is impossible, since in the first part of this proof we showed that $A_{p'_i,\pi_i} \subseteq A_{p_i,\pi_i}$. We have reached a contradiction and our proof is concluded.
\end{proof}

\begin{propi}\label{propi:impos-orderforallprofiles}
Let $F$ be a voting rule that satisfies weak and total unanimity. 
Then there is no pair of orders $\prof{\sigma}\neq \prof{\pi}$ such that $F(\prof{A_{\pref,\sigma}})= F(\prof{A_{\pref,\pi}})$ for all (tolerant or not) preference profiles $\prof{\pref}$.  
\end{propi}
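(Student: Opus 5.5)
The plan is to fix arbitrary orders $\prof{\sigma}\neq\prof{\pi}$ and build a tolerant profile $\prof{\pref}$ on which one order yields $F=\alt$ by total unanimity, while the other yields a profile with a unanimously approved alternative and some alternative missing from some ballot, so that weak unanimity forces $F\neq\alt$. Working in the tolerant domain also covers the general domain.

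\textbf{Step 1 (voters where the orders agree and the designated voter).} Since $\prof{\sigma}\neq\prof{\pi}$, fix a voter $j$ with $\sigma_j\neq\pi_j$, and let $D=\{i\in N\mid \sigma_i\neq\pi_i\}$. For every $i\notin D$, take $\pref_i$ tolerant with ranking equal to the reverse of $\sigma_i$. Then $\sigma_i=\underline{\sigma_i}=\pi_i$, and this voter approves $\alt$ under both orders. For $j$, take $\pref_j$ to be the tolerant preference whose ranking reverses $\sigma_j$. This gives $A_{\pref_j,\sigma_j}=\alt$. For a tolerant preference, $\underline{\sigma_j}$ is the only order producing the full ballot: in any other order some alternative is presented after a better one and is rejected. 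Hence $B_j:=A_{\pref_j,\pi_j}\neq\alt$, and $B_j\ni\pi_j^1$.

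\textbf{Step 2 (first attempt).} Give every $i\in D$ the tolerant reversed-$\sigma_i$ preference. Under $\prof{\sigma}$ everyone approves $\alt$, so $F(\prof{A_{\pref,\sigma}})=\alt$. Under $\prof{\pi}$ the ballots are $B_i$, and $B_j$ misses some alternative $y$. If $U=\bigcap_i B_i\neq\emptyset$, weak unanimity gives $F(\prof{A_{\pref,\pi}})\subseteq U\not\ni y$, and we are done. Symmetrically, reversing the roles of $\prof{\sigma}$ and $\prof{\pi}$ (all voters get reversed-$\pi_i$ preferences) finishes the proof whenever the resulting $\prof{\sigma}$-ballots share an element.

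\textbf{Step 3 (main obstacle: no common element).} The remaining case is when neither attempt produces a unanimously approved alternative. Here I will use Lemma~\ref{lem:order-switch} to modify preferences without losing control of the ballots. For voter $j$, any preference $\pref'_j$ with $B_j\subseteq A_{\pref'_j,\sigma_j}$ still yields $A_{\pref'_j,\pi_j}=B_j\neq\alt$. This frees us to change how $j$ reacts to $\sigma_j$ while keeping $j$'s $\pi$-ballot, and hence the missing alternative $y$, fixed.

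The idea is to abandon the "$F=\alt$" side and instead compare two weakly unanimous situations with different unanimous sets. Fix $x=\pi_j^1\in B_j$. For each other voter $i$, choose a tolerant preference for which $x$ is approved under both $\sigma_i$ and $\pi_i$; by the third point of Remark~\ref{rem:power}, such preferences exist as long as $x$ is placed appropriately relative to $\sigma_i^1$ and $\pi_i^1$. In the same way, choose $\pref'_j$ so that $A_{\pref'_j,\sigma_j}$ contains $B_j$, and hence $x$, together with $y$.

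Then $x$ is unanimously approved under both orders, so weak unanimity applies to both profiles. Under $\prof{\pi}$ the outcome excludes $y$, since $y\notin B_j$. The aim is to arrange the $\sigma$-side so that either everyone approves $y$ as well, or the unanimous sets differ in a way that makes the outcomes differ. If the outcomes still coincide, I would adjust the construction by making all voters in $D$ share the same ballot structure, so that total unanimity can be reached on one side.

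Verifying that such preferences can be chosen simultaneously for all voters in $D$, with a single common $x$, is the delicate combinatorial step. I expect it to require a small case split on whether $x$ precedes $\sigma_i^1$ in $\pi_i$.
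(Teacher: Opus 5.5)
Your Steps 1 and 2 are correct and match the first half of the paper's argument. You also rightly identify Lemma~\ref{lem:order-switch} as the tool for the remaining case. Step 3, however, is a genuine gap, and its strategy cannot close the case. You are looking for a \emph{single} profile on which the axioms force $F(\prof{A_{\pref,\sigma}})\neq F(\prof{A_{\pref,\pi}})$. But weak unanimity only gives upper bounds $F\subseteq U$, and total unanimity pins down an outcome only at $(\alt,\ldots,\alt)$. In your configuration $x$ is unanimously approved under both orders. Consider the rule that returns $\alt$ at $(\alt,\ldots,\alt)$ or when $U=\emptyset$, and otherwise returns the first element of $U$ in a fixed tie-breaking order with $x$ first. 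It satisfies weak and total unanimity and outputs $\{x\}$ on both sides, wherever you place $y$. Your fallback of ``reaching total unanimity on one side'' forces every voter to be tolerant with ranking reverse to that side's order. That is exactly your Step~2 profile (or its mirror), whose intersection you assumed empty. So the ``delicate combinatorial step'' is not merely unverified: no choice of preferences there yields a contradiction.

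The missing idea is to apply the hypothesis to \emph{two} linked profiles rather than one.
\begin{itemize}
\item Let $\prof{B}=(B_1,\ldots,B_n)$ be the $\prof{\pi}$-ballots of your Step~2 profile, with $B_i=\alt$ for $i\notin D$. Total unanimity plus the hypothesis already give $F(\prof{B})=\alt$, even though no axiom applies to $\prof{B}$ directly.
\item Now change only voter $j$. By the second point of Remark~\ref{rem:power}, choose $\pref'_j$ with $A_{\pref'_j,\sigma_j}=B_j$ exactly; in general $\pref'_j$ is not tolerant, which the statement allows. Lemma~\ref{lem:order-switch} then gives $A_{\pref'_j,\pi_j}=B_j$. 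Keep every other voter's preference unchanged.
\item The new profile has the same $\prof{\pi}$-ballots $\prof{B}$, hence outcome $\alt$.
\item Its $\prof{\sigma}$-ballots are $(\alt,\ldots,B_j,\ldots,\alt)$, with unanimous set $B_j\neq\emptyset$. So weak unanimity puts the outcome inside $B_j\subsetneq\alt$.
\item The hypothesis on this profile yields $\alt\subseteq B_j$, a contradiction.
\end{itemize}
This is the paper's argument; the paper intersects over all voters in $D$, but one voter suffices. It also makes your case split in Step~2 unnecessary. You were one move away: you noticed that the Lemma lets you alter $j$'s behaviour under $\sigma_j$ while freezing $B_j$. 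But you then also altered the other voters and discarded the side where the outcome was already known to be $\alt$.
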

\begin{proof}
Suppose for contradiction that there are two orders $\prof{\sigma}$ and $\prof{\pi}$ such that $F(\prof{A_{\pref,\sigma}})= F(\prof{A_{\pref,\pi}})$ for all preference profiles $\prof{\pref}$.
We divide the proof into two cases, according to the number $k$ of voters for which the orders are different. That is, $k= |\{i\in N \mid \sigma_i \neq \pi_i\}|$.

\textbf{Case 1:} $k=1$.

We can assume without loss of generality that $\sigma_1 \neq \pi_1$, that is, the orders are different for the first voter but the same for all other voters.
Consider the profile $\prof{\pref}$ such that $\prof{A_{\pref,\sigma}}=(\alt,\ldots,\alt)$ and $\prof{A_{\pref,\pi}}=(A_1,\ldots,\alt)$.
By weak unanimity, we have that $F(A_1,\ldots,\alt)\subseteq A_1$, and by total unanimity $F(\alt,\ldots,\alt)=\alt$.
Thus, $F(A_1,\ldots,\alt)\neq F(\alt,\ldots,\alt)=\alt$, a contradiction with our hypothesis.

\textbf{Case 2:} $k>1$.

We can assume without loss of generality that $\sigma_i\neq \pi_i$ for all $i\in \{1,\ldots,k\}$, that is, the orders are different for the first $k$ voters and the same for all other voters.
Consider the profile $\prof{\pref}$ such that $\prof{A_{\pref,\sigma}}=(\alt,\ldots,\alt)$ and $\prof{A_{\pref,\pi}}=(A_1,\ldots,A_k,\alt,\ldots,\alt)$.
By weak and total unanimity, $\cap_{i=1}^kA_i=\emptyset$, otherwise $F(A_1,\ldots,A_k,\alt,\ldots,\alt)\subseteq \cap_{i=1}^kA_i\neq \alt=F(\alt,\ldots,\alt)$.
By Lemma~\ref{lem:order-switch}, we can create a profile $\prof{\pref}^i$, for $1\leq i\leq k$, such that $\prof{A_{\pref^i,\sigma}}=(\alt,\ldots,\alt,A_i,\alt,\ldots,\alt)$ and $\prof{A_{\pref^i,\pi}}=(A_1,\ldots,A_k,\alt,\ldots,\alt)$.
By weak unanimity we have that $$F(A_1,\ldots,A_k,\alt,\ldots,\alt)=F(\alt,\ldots,\alt,A_i,\alt,\ldots,\alt)\subseteq A_i$$ for all $i=\{1,\ldots,k\}$.
Thus, we have that $$F(A_1,\ldots,A_k,\alt,\ldots,\alt)\subseteq \cap_{i=1}^kA_i=\emptyset,$$ a contradiction.
\end{proof}

\begin{exa} \label{ex:egal-poss}
    Let $F$ be Nom. Then, the two orders $\prof{\sigma}$ and $\prof{\pi}$ defined below are such that $F(\prof{A_{\pref,\sigma}})= F(\prof{A_{\pref,\pi}})$ for all tolerant profiles $\prof{\pref}$. 
    
    Suppose that $n\geq m$ and define $\prof{\sigma} \neq \prof{\pi}$ such that for every alternative $x\in \alt$ it is the case that $x=\sigma^1_i = \pi^1_i$ for some voter $i\in N$. 
    Then, all alternatives are nominated in all profiles $\prof{\pref}$, so $F(\prof{A_{\pref,\sigma}})= F(\prof{A_{\pref,\pi}})=\alt$.

    Second, suppose that $n<m$. 
    Let $\alt'$ be the first $n$ alternatives according to some arbitrary order of $X$. 
    Define $\prof{\sigma} \neq \prof{\pi}$ such that the three following conditions hold:
    \begin{itemize}
        \item 
    for every alternative $x\in \alt'$ it is the case that $x=\sigma^1_i = \pi^1_i$ for some voter $i\in N$; 
        \item 
    all alternatives in $\alt'$ appear before all alternatives in $\alt\setminus \alt'$ for every $i \in N$;
        \item
    all alternatives in $\alt\setminus \alt'$ appear in the same order for every $i \in N$.
    \end{itemize}
    Then, all alternatives in $\alt'$ are nominated in all profiles $\prof{\pref}$, so $\alt' \subseteq F(\prof{A_{\pref,\sigma}})$ and $\alt' \subseteq F(\prof{A_{\pref,\pi}})$. 
    Moreover, given $x\in \alt$, define the following sets:
    \[A^{\sigma}_x = \{y\in \alt : (y,x) \in \sigma_i \text{ for some } i \in N\}\]
    \[A^{\pi}_x = \{y\in \alt : (y,x) \in \pi_i \text{ for some } i \in N\}\]
     Note that $A^{\sigma}_x = A^{\pi}_x$ for all $x \in X\setminus X'$. 
     Now, an alternative $x \in \alt \setminus \alt'$ is a winner according to the nomination rule if and only if there exists a voter $i\in N$ such that $(x,y) \in \pref_i$ for all $y\in A^{\sigma}_x$ (that is, if and only if $x$ is more preferred over all alternatives that are previously presented to a voter).
\end{exa}

Note that Example~\ref{ex:egal-poss} does not work for the general domain (including all non-tolerant profiles). 
In fact, no pair of different orders can guarantee the same outcome for all profiles given the nomination rule (see Remark~\ref{rem:nomin-nontol}).

\begin{remark} \label{rem:nomin-nontol}
Consider arbitrary $\prof{\sigma}\neq \prof{\pi}$. Then, there exist $i\in N$ and $x,y\in X$ such that $(x,y) \in\sigma_i$ and $(y,x) \in \pi_i$. Take the profile~$\prof{\pref}$ such that $x$ is not acceptable for every $j\in N\setminus \{i\}$, while $x$ is acceptable for $i$ and $\pref_i^1=y$ and $\pref_i^2=x$. Then, $x \notin A_{\pref_j,\sigma_j}$ and $x \notin A_{\pref_j,\pi_j}$ for all $j\in N\setminus \{i\}$. Also,  $x \in A_{\pref_i,\sigma_i}$ but $x \notin A_{\pref_i,\pi_i}$. This means that for $F$ being Nom it holds that $x\in F(\prof{A_{\pref,\sigma}}) $ but $x\notin F(\prof{A_{\pref,\pi}})$. 
\end{remark} 

\begin{exa} \label{ex:non-neutr-nomin}
    Consider a non-neutral voting rule according to which the nomination rule is followed except for two alternatives $x$ and $y$ that are excluded from the winning set for all profiles. Then, consider two orders $\prof{\sigma}$ and $\prof{\pi}$ that follow the construction of Example~\ref{ex:egal-poss} with the exception that $x$ and $y$ are shown last for all voters. Then, the reasoning of Example~\ref{ex:egal-poss}  follows. 
\end{exa}    

In contrast to the several impossibilities produced in previous sections, this section sheds light on certain positive observations, as we are able to find two orders for which the winners of a given voting rule are invariant. 
This result becomes relevant for instance in a participatory budgeting scenario where all projects are acceptable to everyone.
A regulatory entity that wants a fair process might then force the social planner to use only the presentation orders that make Nom anchor-proof.
We can see this as a minimal guarantee that the regulatory institution has against ill-intentioned social planners.

\subsection{Profiles Tailored to a Fixed Pair of Orders}\label{sec:Profiles Tailored to a Fixed Pair of Orders}

Lastly, we examine whether for any pair of orders there exists a profile that produces the same winners for both orders. 
The answer is trivial if we allow for any profile: the outcomes of intolerant profiles are invariant for all orders  (recall Remark~\ref{rem:top-prof}). 
However, if we consider a voting rule satisfying unanimity, the answer becomes negative when we only allow for tolerant profiles (Proposition~\ref{prop:2inv-prof}). 
Example~\ref{ex:exrule-poss} shows that a positive answer is achievable for other rules, such as a less decisive variation of \text{AV}.

\begin{propi} \label{prop:2inv-prof}
Consider a voting rule $F$ that satisfies unanimity. Then, there exist two orders $\prof{\sigma}$ and $\prof{\pi}$ such that  $F(\prof{A_{\pref,\sigma}}) \neq F(\prof{A_{\pref,\pi}})$ for all tolerant profiles~$\prof{\pref}$. 
\end{propi}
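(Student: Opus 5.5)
The plan is to fix two alternatives $x\neq y$ in $\alt$ and build two orders that differ only in the relative position of $x$ and $y$ at the very start. Concretely, let $\sigma_i=(x,y,z_3,\ldots,z_m)$ and $\pi_i=(y,x,z_3,\ldots,z_m)$ for every voter $i\in N$, where $z_3,\ldots,z_m$ is one fixed enumeration of $\alt\setminus\{x,y\}$ used for all voters. Clearly $\prof{\sigma}\neq\prof{\pi}$. I will then show that for every tolerant profile~$\prof{\pref}$ the sets of unanimously approved alternatives under $\prof{\sigma}$ and under $\prof{\pi}$ are both non-empty and differ. Unanimity then forces $F(\prof{A_{\pref,\sigma}})\neq F(\prof{A_{\pref,\pi}})$.

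\emph{Step 1: both unanimity sets are non-empty.} Since the profile is tolerant, the first presented alternative is always approved. Hence $x\in A_{\pref_i,\sigma_i}$ and $y\in A_{\pref_i,\pi_i}$ for all $i$. Write $U_\sigma$ and $U_\pi$ for the sets of unanimously approved alternatives under the two orders. Then $x\in U_\sigma\neq\emptyset$ and $y\in U_\pi\neq\emptyset$, so unanimity gives $F(\prof{A_{\pref,\sigma}})=U_\sigma$ and $F(\prof{A_{\pref,\pi}})=U_\pi$.

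\emph{Step 2: $U_\sigma\neq U_\pi$.} Suppose for contradiction that $U_\sigma=U_\pi$. Then $y\in U_\sigma$, so every voter approves $y$ at step~2 of $\sigma_i$. Since $x$ is already approved at that point, this means $(y,x)\in\pref_i$ for all $i$. Symmetrically, $x\in U_\pi$ means every voter approves $x$ right after $y$ under $\pi_i$, i.e.\ $(x,y)\in\pref_i$ for all $i$. As $N\neq\emptyset$, the two conclusions contradict each other.

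Step~2 already finishes the proof, so the argument never needs to compare how the remaining alternatives $z_3,\ldots,z_m$ are treated under the two orders. The only point requiring care is Step~1: tolerance is exactly what guarantees that the first presented alternative is approved, so that the unanimity sets are non-empty and unanimity pins down $F$ on both ballot profiles.
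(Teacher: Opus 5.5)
Your proof is correct and is essentially the paper's argument: showing $x$ first to every voter under $\prof{\sigma}$ and $y$ first under $\prof{\pi}$ makes both unanimity sets non-empty, unanimity then forces $F$ to equal those sets, and a voter's strict ranking of $x$ versus $y$ prevents the two sets from coinciding. Fixing $y$ and $x$ in the second position is harmless but not needed: once $x$ is shown first, $y\in A_{p_i,\sigma_i}$ already forces $(y,x)\in p_i$, so only $\sigma_i^1=x$ and $\pi_i^1=y$ matter.
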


\begin{proof}
Take two orders $\prof{\sigma}$ and $\prof{\pi}$ such that $\sigma_i^1=x$ and $\pi_i^1=y$, for alternatives $x\neq y$ and  for all $i\in N$. 
Then, suppose for contradiction that  $F(\prof{A_{\pref,\sigma}}) = F(\prof{A_{\pref,\pi}})$ for some tolerant profile $\prof{\pref}$.
    By construction, we know that $x \in A_{\pref_i, \sigma_i}$ and $y \in A_{\pref_i,\pi_i}$ for all $i\in N$.  
    Thus, unanimity implies that $\{x,y\}\subseteq F(\prof{A_{\pref,\sigma}}) = F(\prof{A_{\pref,\pi}})$. 
    Without loss of generality, suppose that $(x,y) \in \pref_i$ for some $i\in N$ (the case where $(y,x) \in \pref_i$ is completely symmetric). Then, since $\pi_i^1=y$, we have that $x \notin A_{\pref_i,\pi_i}$. So by unanimity, $x\notin F(\prof{A_{\pref,\pi}})$. Contradiction.
\end{proof}

\begin{exa} \label{ex:exrule-poss}
Consider the voting rule $F$ that returns all alternatives when at least one voter approves at least two alternatives and behaves as \text{AV} otherwise (when each voter approves only one alternative). 
We can find a profile~$\prof{\pref}$ that is anchor-proof for $F$.
Indeed, for arbitrary orders $\prof{\sigma}\neq \prof{\pi}$ consider a profile such that $\pref_1=(\sigma_1^2, \sigma_1^1,\ldots)$ and  $\pref_2=(\pi_1^2, \pi_1^1,\ldots)$. It will be the case that $A_{\pref_1, \sigma_1} = \{\sigma_1^1, \sigma_1^2\}$ and $A_{\pref_2, \pi_2} = \{\pi_2^1, \pi_2^2\}$. Thus, $F(\prof{A_{\pref,\sigma}})= F(\prof{A_{\pref,\pi}})= \alt$ according to the definition of the rule.
\end{exa}

This section speaks about scenarios where the presentation orders are fixed and limited, for instance within electoral contexts that are supposed to exhibit the alternatives to voters in a pre-determined way. We ask whether there are specific preference structures that would induce robust outcomes for at least  two different fixed presentation orders within groups of tolerant voters, thus restricting the potential influence of the social planner. We find that this rather weak requirement is not satisfied as long as the voting rule adheres to unanimity. However, there do exist situations where the social planner has no influence over the outcome. One of them concerns the tolerance of voters: In particular, consider scenarios where projects in participatory budgeting are extreme, in the sense that each project clearly benefits one part of the society while harming all the rest; then, voters can be assumed to accept only those projects from which they benefit and to reject all others, yielding an intolerant preference profile in which the outcome is independent of the presentation order.
The other situation that sets constraints on the influence of the social planner does not necessarily demand intolerant voters, but employs a voting rule  that is only meaningful in situations of indirect intolerance: unless each voter only approves a single alternative where simple \text{AV} may be applied, that rule returns a tie between all alternatives and is thus completely undecisive.


\section{Strategic Presentation Orders} \label{sec:strategy}

 In many group decision environments, a social planner---such as an electoral authority, participatory-budgeting platform, or other decision designer---controls precisely how alternatives are presented to voters. This raises a strategic question. Suppose that the planner has her own preferences over collective outcomes. Can she exploit the presentation order to induce a desirable result, when voters' ballots are subject to an anchoring bias, even if their intrinsic preferences are fixed? The answer depends not only on the voting rule but also on the information available to the planner about voters' preferences.

We first study the planner's ability to manipulate the outcome when she has \emph{full information} about the voters' preferences. The strategic problem is closely connected to our notion of anchor-proofness. If a voting rule is anchor-proof, the planner cannot affect the collective outcome through the presentation order, regardless of her preferences or information. For non-anchor-proof rules, however, the planner may be able to exploit order dependence. This underlies our characterizations of anchor-proof profiles under \text{AV} and \text{Nom} (Propositions~\ref{prop:troubledwaters} and~\ref{prop:nomination-char}), as well as our general result for all weakly unanimous rules (Proposition~\ref{prop:weakunacharacterisationprofile}).

We then relax the assumption of full information and study precisely how the planner's ability to manipulate the outcome depends on the information she has about voters' preferences. We distinguish between various degrees of information. Under \emph{zero information}, the planner must choose a presentation order without observing any information about voters' preferences. We first show that, under zero information, manipulation is impossible for any rule satisfying weak and total unanimity, including \text{AV} (Theorem~\ref{thm:zero-info}), and also for the nomination rule (Proposition~\ref{prop:nom-info0manip}). We then demonstrate that even limited aggregate information---such as acceptability points or plurality information---can restore manipulability for broad classes of rules (Theorems~\ref{thm:manip-unanimous} and~\ref{thm:manip-weakandtotalunaniinfoplu}, and Proposition~\ref{prop:nom-manip}).

\subsection{Anchor-proof Profiles under Full Information}

Let us start with our results regarding anchor-proof profiles. At such profiles, the planner's control over the presentation order does not provide an additional instrument for influencing the collective outcome. Under \text{AV}, homogeneity seems to help with anchor-proofness, but to an extent, since profiles with  more than one unanimously accepted alternative are not anchor-proof.

\begin{propi}\label{prop:necessaryforinvariance}
$\text{AV}$ is anchor-proof for a profile $\prof{\pref}=(\pref_1,\ldots,\pref_n)$ only if $\prof{\pref}$ does not have more than one unanimously accepted alternative; that is, only if $|\{x\in \alt \mid  x\in \mathit{ACC}(\pref_i) \text{ for all } i \in N \} | \leq 1$.
\end{propi}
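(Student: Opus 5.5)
We argue by contraposition. Suppose the profile $\prof{\pref}$ has at least two unanimously accepted alternatives, say $x\neq y$ with $x,y\in \mathit{ACC}(\pref_i)$ for every $i\in N$. We will construct two orders $\prof{\sigma}$ and $\prof{\pi}$ with $\text{AV}(\prof{A_{\pref,\sigma}})\neq \text{AV}(\prof{A_{\pref,\pi}})$. The idea is to use the planner's power from Remark~\ref{rem:power} to make $x$ and $y$ unanimously approved under one order, so both are AV winners. Under the other order, we make only one of them unanimously approved, so the other loses.

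For the first order, we put $x$ and $y$ at the front for every voter, ordered according to that voter's preference. If $(y,x)\in\pref_i$, voter $i$ sees $x$ first and then $y$; otherwise she sees $y$ first and then $x$. At step~1 the first of these is approved since it is acceptable. At step~2 the second is strictly preferred to the only approved alternative and is acceptable, so it is approved too. Hence $\{x,y\}\subseteq A_{\pref_i,\sigma_i}$ for all $i$. Then $app(x)=app(y)=n$, the maximum possible score, so $\{x,y\}\subseteq \text{AV}(\prof{A_{\pref,\sigma}})$.

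For the second order, we want some voter who approves $x$ but not $y$ (or vice versa), while the other alternative stays unanimously approved. Take voter~1. Without loss of generality $(x,y)\in\pref_1$; otherwise swap the names. Define $\pi_1$ to present $x$ first and then $y$. Then $x$ is approved and $y$ is rejected because $x$ is a better anchor, so $y\notin A_{\pref_1,\pi_1}$. For all other voters set $\pi_j=\sigma_j$. We then get $app_{\prof{\pref},\prof{\pi}}(x)=n$ and $app_{\prof{\pref},\prof{\pi}}(y)=n-1$. Hence $y\notin\text{AV}(\prof{A_{\pref,\pi}})$, while $y\in\text{AV}(\prof{A_{\pref,\sigma}})$. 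This contradicts anchor-proofness.

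There is no real obstacle here; the only points to check carefully are the mechanics of the anchoring procedure. The first point is that presenting the less-preferred acceptable alternative first, and the more-preferred one second, leads to both being approved. The second is that voter~1's ballot under $\pi_1$ excludes $y$ no matter what comes later, since approvals are never revoked and $y$ has already been rejected. The case $n=1$ needs no separate treatment, since the same argument gives AV winner sets containing $\{x,y\}$ versus excluding $y$.
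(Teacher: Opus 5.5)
Your proof is correct and follows essentially the same idea as the paper's: a unanimously accepted alternative can be made unanimously approved (hence an AV winner), while a voter who prefers $x$ to $y$ and sees $x$ first rejects $y$. The paper contrasts ``$x$ first for every voter'' with ``$y$ first for every voter'', whereas you change only voter~1's order and track only whether $y$ wins. Your version is slightly cleaner because it never relies on the paper's claim that $x$ is the \emph{unique} winner, which can fail when some third alternative is every voter's top choice.
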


\begin{proof}
Suppose that the profile $\prof{\pref}$ has at least two unanimously accepted alternatives $x$ and $y$. Without loss of generality, suppose that there exists a voter $j\in N$ such that $(x, y) \in \pref_j$, meaning that voter $j$ prefers $x$ to $y$.
Using the order $\prof{\sigma}$ such that $\sigma^1_i=x$ for every $i\in N$, the social planner can make sure that $x$ gets $n$ approval points while all other alternatives (including $y$) get fewer approval points, and thus $x$ is the unique \text{AV} winner. Alternatively, using $\prof{\pi}$ such that $\pi^1_i=y$ for every $i\in N$, the social planner can ensure that $y$ gets $n$ approval points and is thus an \text{AV} winner. 
So $\prof{\pref}$ is not anchor-proof.
\end{proof}

 The idea behind anchor-proofness of AV for certain profiles is that when plurality leaders are sufficiently dominant, then no alternative can overtake them via acceptability votes induced by particular presentation orders.

\begin{propi}\label{prop:troubledwaters}
$\text{AV}$ is anchor-proof for a profile $\prof{\pref}$ if and only if:
$$ \text{ for all } x\in argmax_{x\in X} plur_{\prof{\pref}}(x) \text{ it holds that } plur_{\prof{\pref}}(x) \geq acc_{\prof{\pref}}(y) \text{ for all } y\in X,$$
where the equality only holds exactly when $y\in argmax_{x\in X} plur_{\prof{\pref}}(x)$. 
\end{propi}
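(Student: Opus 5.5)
The plan is to write $M=argmax_{x\in X} plur_{\prof{\pref}}(x)$ and $P$ for the common maximal plurality score, and to use the sandwich $plur_{\prof{\pref}}(x)\leq app_{\prof{\pref},\prof{\sigma}}(x)\leq acc_{\prof{\pref}}(x)$ together with the reference order $\prof{\overline{\sigma}}$ (every voter sees her top first). Under $\prof{\overline{\sigma}}$ every ballot is the singleton $\{\pref_i^1\}$, so $\text{AV}(\prof{A_{\pref,\overline{\sigma}}})=M$. Anchor-proofness is therefore equivalent to $\text{AV}(\prof{A_{\pref,\sigma}})=M$ for every order $\prof{\sigma}$, and both directions reduce to comparison with $M$.

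\textbf{Sufficiency.} Read the condition as: $acc_{\prof{\pref}}(x)=P$ for every $x\in M$, and $acc_{\prof{\pref}}(y)<P$ for every $y\notin M$. Fix an arbitrary $\prof{\sigma}$. For $x\in M$ the sandwich gives $P\le app_{\prof{\pref},\prof{\sigma}}(x)\le acc_{\prof{\pref}}(x)=P$, so every alternative of $M$ receives exactly $P$ approvals. For $y\notin M$ we get $app_{\prof{\pref},\prof{\sigma}}(y)\le acc_{\prof{\pref}}(y)<P$. Hence $\text{AV}(\prof{A_{\pref,\sigma}})=M$ for every $\prof{\sigma}$.

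\textbf{Necessity, core construction.} Suppose the condition fails because some $y\notin M$ has $acc_{\prof{\pref}}(y)\ge P$. I build $\prof{\sigma}$ as follows. A voter $i$ with $y\in\mathit{ACC}(\pref_i)$ is shown $y$ first, then $\pref_i^1$, then the rest. Every other voter is shown her top first. In the first group, $y$ is approved (there is no earlier anchor), then $\pref_i^1$ is approved, and nothing shown afterwards beats $\pref_i^1$. So the ballot is $\{y,\pref_i^1\}$, or $\{y\}$ if $y=\pref_i^1$. In the second group the ballot is $\{\pref_i^1\}$. Consequently $app_{\prof{\pref},\prof{\sigma}}(y)=acc_{\prof{\pref}}(y)\ge P$, while every $z\neq y$ gets exactly $plur_{\prof{\pref}}(z)\le P$. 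So $y$ is an AV winner under $\prof{\sigma}$ but not under $\prof{\overline{\sigma}}$, which contradicts anchor-proofness. The same construction handles some $y\in M$ with $acc_{\prof{\pref}}(y)>P$ whenever $|M|\ge 2$: then $y$ becomes the unique winner, which differs from $M$.

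\textbf{Main obstacle.} The hard case is $M=\{y\}$ with $acc_{\prof{\pref}}(y)>P$. Here the construction above makes $y$ win again, so it produces no change. My first attempt would be the order that gives the ballots $\mathit{ACC}(\pref_i)$ for all voters (least-to-most), or targeted orders that inflate some other $z$ to $acc_{\prof{\pref}}(z)$ while holding $y$ at $P$. But $y$ can never drop below $P$, so any other alternative $z$ can at best tie with $y$, and only if $acc_{\prof{\pref}}(z)\ge P$ and the orders keep $y$ at exactly $P$. This case probably does not close as stated. Take $n=3$, alternatives $y,z,w$, voters $1,2$ intolerant with top $y$, and voter $3$ with ranking $(z,y,w)$ and $t_3=2$. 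Voters $1,2$ always approve exactly $\{y\}$, and voter $3$ always approves $z$ and possibly $y$. So $y$ always has at least $2$ approvals against at most $1$ for $z$ and $0$ for $w$, and $\text{AV}$ always returns $\{y\}$. Yet $acc_{\prof{\pref}}(y)=3>2=P$. So before writing the proof I would have to either restrict the equality clause (for instance, require only that no $y\notin M$ reaches $P$, with a separate treatment of $|M|=1$) or weaken the necessity claim. I expect this reconciliation to be the main difficulty.
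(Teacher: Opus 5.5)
Your sufficiency argument (the sandwich $plur_{\prof{\pref}}\le app_{\prof{\pref},\prof{\sigma}}\le acc_{\prof{\pref}}$) and your necessity construction are the paper's proof. The necessity construction compares with $\prof{\overline{\sigma}}$, under which AV returns $M$, and uses an order giving every voter who accepts $y$ the ballot $\{\pref_i^1,y\}$ and everyone else $\{\pref_i^1\}$; the paper obtains this order from Remark~\ref{rem:power}. Your diagnosis of the leftover case is also correct, and it exposes an error in the statement itself rather than a gap in your plan. The paper dismisses $y\in M$ with ``The proof is similar'', but when $M=\{y\}$ and $acc_{\prof{\pref}}(y)>P$ the construction only strengthens $y$. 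Your three-voter profile checks out. Voters $1$ and $2$ submit $\{y\}$ under every order, and voter $3$ always approves $z$, sometimes $y$, never $w$. So AV returns $\{y\}$ for every order. The condition nevertheless fails: allowing $y=x$ in ``for all $y\in X$'' forces $acc_{\prof{\pref}}(x)=plur_{\prof{\pref}}(x)$ for every plurality leader $x$, and here $acc_{\prof{\pref}}(y)=3>2=plur_{\prof{\pref}}(y)$. Hence the ``only if'' direction is false as written.

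The natural repair is to quantify over $y\in\alt\setminus\{x\}$ instead of $y\in\alt$. This is equivalent to requiring $acc_{\prof{\pref}}(y)<P$ for all $y\notin M$ and, only when $|M|\ge 2$, $acc_{\prof{\pref}}(x)=P$ for all $x\in M$. That is exactly the separate treatment of $|M|=1$ you anticipated. Under this reading the paper's ``similar'' step is sound, because a leader $y\neq x$ exists only when $|M|\ge 2$, and there your construction makes $y$ the unique winner, which differs from $M$. Your proposal then proves the corrected statement completely, with one small adjustment to sufficiency when $M=\{x\}$. There you cannot use $acc_{\prof{\pref}}(x)=P$; use only $app_{\prof{\pref},\prof{\sigma}}(x)\ge P>acc_{\prof{\pref}}(y)\ge app_{\prof{\pref},\prof{\sigma}}(y)$ for every $y\neq x$.
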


\begin{proof}
Let $F$ be $\text{AV}$.

(only if) Suppose that there is an alternative  $y\notin argmax_{x\in X}plur_{\prof{\pref}}(x)$ not plurality-leading  such that for some plurality leader $x\in argmax_{x\in X}plur_{\prof{\pref}}(x)$ it is the case that $plur_{\prof{\pref}}(x) \leq acc_{\prof{\pref}}(y)$.
Note that $F(\prof{A_{\prof{\pref},\overline{\sigma}}})=argmax_{x\in X}plur_{\prof{\pref}}(x)$.
By Remark~\ref{rem:power}, there exists an order $\prof{\mu}$ such that for every voter $i$ who finds $y$ acceptable, it holds that $A_{p_i,\mu_i}=\{p^1_i,y\}$; and $A_{p_k,\mu_k}=p^1_k$ otherwise.
Therefore $y\in F(\prof{A_{\prof{\pref},\mu}})\neq argmax_{x\in X}plur_{\prof{\pref}}(x)$, which means that $\prof{\pref}$ is not anchor-proof. 
The proof is similar if  $y\in argmax_{x\in X}plur_{\prof{\pref}}(x)$.

(if) For any presentation order, each plurality leader $x\in argmax_{x\in X}plur_{\prof{\pref}}(x)$ gets an amount of approval points at least equal to its plurality points (recall that every alternative at the top of a preference ranking will be approved by that voter, independently of the presentation order).
Every other alternative gets an amount of approval points at most equal to its acceptability score, which by hypothesis is no larger than the plurality score of a leader, with equality only for plurality leaders. Therefore the winners are always  the plurality leaders.
\end{proof}

Proposition~\ref{prop:nomination-char} answers the same question for Nom. 
Intuitively, anchor-proofness holds precisely when all acceptable alternatives are already someone's top choice; thus reordering cannot unlock new winners.

\begin{propi}
\label{prop:nomination-char}
Nom is anchor-proof for a
 profile $\prof{\pref}$ 
if and only if  $\mathit{PLUR}(\prof{\pref}) = \mathit{ACC}(\prof{\pref})$.
\end{propi}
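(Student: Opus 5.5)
The plan is to show that, for every order, the set of nominated alternatives is sandwiched between two fixed sets, and that both bounds are actually attained by suitable orders. The basic fact is that for every voter $i$ and every order $\sigma_i$ we have
\[
\{\pref_i^1\}\subseteq A_{\pref_i,\sigma_i}\subseteq \mathit{ACC}(\pref_i).
\]
The top alternative is always approved: it is acceptable, and when it is presented no previously seen alternative is preferred to it. Conversely, only acceptable alternatives are ever added to the ballot. Taking unions over voters gives, for every order $\prof{\sigma}$,
\[
\mathit{PLUR}(\prof{\pref})\subseteq \text{Nom}(\prof{A_{\pref,\sigma}})\subseteq \mathit{ACC}(\prof{\pref}).
\]

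For the \emph{if} direction, suppose $\mathit{PLUR}(\prof{\pref})=\mathit{ACC}(\prof{\pref})$. The sandwich above then collapses, so $\text{Nom}(\prof{A_{\pref,\sigma}})=\mathit{PLUR}(\prof{\pref})$ for every $\prof{\sigma}$. Hence the outcome does not depend on the order, and Nom is anchor-proof for $\prof{\pref}$.

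For the \emph{only if} direction, I exhibit two orders giving different outcomes whenever the two sets differ.
\begin{itemize}
\item Take $\overline{\prof{\sigma}}$, where each voter sees her top alternative first. Then $A_{\pref_i,\overline{\sigma_i}}=\{\pref_i^1\}$ for each $i$, so $\text{Nom}(\prof{A_{\pref,\overline{\sigma}}})=\mathit{PLUR}(\prof{\pref})$.
\item Take $\underline{\prof{\sigma}}$, where each voter sees her alternatives from least to most preferred. Then $A_{\pref_i,\underline{\sigma_i}}=\mathit{ACC}(\pref_i)$ for each $i$, so $\text{Nom}(\prof{A_{\pref,\underline{\sigma}}})=\mathit{ACC}(\prof{\pref})$.
\end{itemize}
If $\mathit{PLUR}(\prof{\pref})\neq\mathit{ACC}(\prof{\pref})$, these two orders therefore produce different winner sets, and Nom is not anchor-proof for $\prof{\pref}$.

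The only step that needs any care is the claim $A_{\pref_i,\underline{\sigma_i}}=\mathit{ACC}(\pref_i)$. It was stated without proof in the model section, and it holds because under increasing presentation every newly shown alternative is strictly preferred to everything seen before, so the approval condition reduces to acceptability. Everything else follows directly from the sandwich.
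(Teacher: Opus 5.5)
Your proof is correct and matches the paper's: the \emph{if} direction is the same sandwich $\mathit{PLUR}(\prof{\pref})\subseteq \text{Nom}(\prof{A_{\pref,\sigma}})\subseteq \mathit{ACC}(\prof{\pref})$, and for \emph{only if} you likewise compare against $\overline{\prof{\sigma}}$. The only difference is the second order. The paper changes $\overline{\prof{\sigma}}$ for a single voter $i$, showing her first some $x\in \mathit{ACC}(\pref_i)\setminus \mathit{PLUR}(\prof{\pref})$ and then $\pref_i^1$, which yields $\mathit{PLUR}(\prof{\pref})\cup\{x\}$; you instead use $\underline{\prof{\sigma}}$ to reach $\mathit{ACC}(\prof{\pref})$ directly, which avoids choosing a witness $x$.
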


\begin{proof}

Let $F$ be Nom. We prove the two directions.

(only if) Suppose that   $\mathit{PLUR}(\prof{\pref}) \neq \mathit{ACC}(\prof{\pref})$. 
Then, there exist an alternative $x\in \alt$ and a voter $i\in N$ such that $x \in \mathit{ACC}(\pref_i)$ and $x\neq \pref_i^1$. 
Consider the order $\prof{\overline{\sigma}}$ and the different order $\prof{\pi}$ such that $\pi_j^1 = \pref_j^1$ for all $j\in N \setminus \{i\}$ and $\pi_i^1 = x$ and $\pi_i^2 = \pref_i^1$. 
Then, by definition of the nomination rule we have that $F(\prof{A_{\pref,\overline{\sigma}}}) = \mathit{PLUR}(\prof{\pref}) \neq \mathit{PLUR}(\prof{\pref}) \cup \{x\} = F(\prof{A_{\pref,\pi}})$. 

(if) Suppose that $\mathit{PLUR}(\prof{\pref}) = \mathit{ACC}(\prof{\pref})$. For every $x\in \mathit{PLUR}(\prof{\pref})$ and order $\prof{\sigma}$ we know that $x\in A_{\pref_i,\sigma_i}$ for some voter~$i$. 
Then, by definition of the nomination rule we have that  $\mathit{PLUR}(\prof{\pref}) \subseteq F(\prof{A_{\pref,\sigma}}) \subseteq \mathit{ACC}(\prof{\pref}) = \mathit{PLUR}(\prof{\pref})$ for every order $\prof{\sigma}$, which means that $F(\prof{A_{\pref,\sigma}}) = \mathit{PLUR}(\prof{\pref})$ for every order $\prof{\sigma}$.  
\end{proof}

Propositions~\ref{prop:troubledwaters} and~\ref{prop:nomination-char} characterize the anchor-proof profiles for two particular rules, namely AV and Nom.
Next, Proposition~\ref{prop:weakunacharacterisationprofile} characterizes the profiles that are anchor-proof for \emph{all} weakly unanimous rules.\footnote{We may attempt to characterize the profiles for which  an arbitrary weakly unanimous rule is anchor-proof.
But since any rule can be transformed into a weakly unanimous rule by adding the condition to select the set of unanimously approved alternatives when this set is non-empty, anchor-proofness is strongly rule-dependent.} Recall that an alternative $x \in \alt$ is called \emph{unanimously accepted} in a profile $\prof{\pref} = (\pref_1,\ldots,\pref_n)$ if $x \in \mathit{ACC}(\pref_i)$ for every $i\in N$.

\begin{propi}\label{prop:weakunacharacterisationprofile} 
 It holds that every weakly unanimous voting rule is anchor-proof for a profile~$\prof{\pref}$ if and only if $\prof{p}$ is intolerant or it has a unique unanimously accepted alternative that is ranked first by every voter.
\end{propi}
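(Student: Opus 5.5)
The plan is to prove the two directions separately. For the \emph{if} direction I will rely on Remark~\ref{rem:top-prof} and on the fact that a voter's top alternative is always approved. For the \emph{only if} direction I will build, for each profile outside the two classes, a tailor-made weakly unanimous rule together with two orders that this rule separates. One caveat affects the whole argument. I will read ``voting rule'' as returning a \emph{non-empty} set of winners. Without this, the direction ``unique unanimously accepted top $\Rightarrow$ anchor-proof'' already fails: a rule could output $\{x\}$ on one ballot profile and $\emptyset$ on another while still satisfying weak unanimity.

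\emph{If.} Suppose first that $\prof{\pref}$ is intolerant. By the second point of Remark~\ref{rem:top-prof}, the ballot profile $\prof{A_{\pref,\sigma}}$ does not depend on $\prof{\sigma}$, so every rule is anchor-proof for $\prof{\pref}$. Suppose instead that there is a unique unanimously accepted alternative $x$ and that $\pref_i^1=x$ for all $i$. Fix any order $\prof{\sigma}$. Each voter approves her top alternative, so $x\in A_{\pref_i,\sigma_i}$ for every $i$. Any other unanimously approved alternative would be unanimously accepted, which uniqueness rules out. Hence $U=\{x\}$ for every order, and weak unanimity together with non-emptiness gives $F(\prof{A_{\pref,\sigma}})=\{x\}$ for all $\prof{\sigma}$.

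\emph{Only if.} Let $\prof{\pref}$ be neither intolerant nor of the special form; I will split into two cases according to the plurality tops.

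Case~1: all voters share the top $x$. Then $x$ is unanimously accepted. Because the profile is not of the special form, some $y\neq x$ is also unanimously accepted. Take $F^U$ to be the rule that outputs $U$ when $U\neq\emptyset$ and $\alt$ otherwise; it is weakly unanimous. Compare the order $\prof{\overline{\sigma}}$, which yields $(\{x\},\ldots,\{x\})$, with the order $\prof{\pi}$ in which every voter sees $y$ and then $x$. Under $\prof{\pi}$ all ballots contain both $y$ and $x$, so $F^U$ outputs $\{x\}$ under $\prof{\overline{\sigma}}$ and a set containing $\{x,y\}$ under $\prof{\pi}$.

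Case~2: the tops are not all equal. Then $U=\emptyset$ under $\prof{\overline{\sigma}}$. Since the profile is not intolerant, some voter $i$ has $t_i\ge 2$. Let $\prof{\pi}$ coincide with $\prof{\overline{\sigma}}$ except that $\pi_i$ starts with $\pref_i^2$ and then $\pref_i^1$, so that $A_{\pref_i,\pi_i}=\{\pref_i^2,\pref_i^1\}$ and the ballot profile changes. Define a non-neutral rule as follows: $F(\prof{A})=U$ if $U\neq\emptyset$; $F(\prof{A_{\pref,\overline{\sigma}}})=\alt$; and $F(\prof{A})=\{z\}$ for a fixed $z$ on every other profile. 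This rule is weakly unanimous. Under $\prof{\pi}$ it outputs either a non-empty $U$ of size at most $2<m$ or $\{z\}$, neither of which equals $\alt$.

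The main obstacle I expect is the subcase of Case~2 in which changing voter $i$'s ballot accidentally creates a unanimously approved alternative. I avoid this by letting $F$ single out the particular ballot profile $\prof{A_{\pref,\overline{\sigma}}}$, which is legitimate because the statement only requires \emph{some} weakly unanimous rule to fail anchor-proofness. What remains is to check that $m>2$ makes the outputs under $\prof{\overline{\sigma}}$ and $\prof{\pi}$ genuinely differ.
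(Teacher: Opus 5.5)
Your proof is correct. The \emph{if} direction is the paper's, except that you make explicit an assumption the paper uses silently. The paper concludes $F(\prof{A_{\pref,\sigma}})=\{x\}$ ``by weak unanimity'', but weak unanimity only yields $F(\prof{A_{\pref,\sigma}})\subseteq\{x\}$. Since the paper's rules map into $2^{\alt}$, your non-emptiness reading is genuinely needed for the statement to hold.

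For \emph{only if}, the paper splits by the number of unanimously accepted alternatives. With at least two, it invokes Proposition~\ref{prop:necessaryforinvariance}, so \text{AV} itself fails. With exactly one that is not everyone's top, it uses the rule ``$U$ if non-empty, else $\alt$''. With none, it uses ``$U$ if non-empty, else the largest ballot'' and compares $\prof{\overline{\sigma}}$ with $\prof{\underline{\sigma}}$. You split instead on whether all tops coincide. In the heterogeneous case you hard-code the output on the single ballot profile $\prof{A_{\pref,\overline{\sigma}}}$ and compare it with a one-voter perturbation. This covers the paper's last two cases, and part of its first, in one stroke, and it removes any worry about accidentally created unanimous approvals. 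The cost is an artificial, non-neutral rule, whereas the paper's rules are more natural and, in its first case, show that even \text{AV} fails.

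Incidentally, the paper's second case compares the order with $\sigma_i^1=x$ against $\prof{\underline{\sigma}}$. Under $\prof{\underline{\sigma}}$ every ballot is $\mathit{ACC}(\pref_i)\ni x$, so $U=\{x\}$ and the rule returns $\{x\}$ both times. The comparison should use $\prof{\overline{\sigma}}$ instead: the tops then differ, so $U=\emptyset$ and the output is $\alt$. That is the order around which your argument is built.
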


\begin{proof}
Given an approval profile $\prof{A}=(A_1,\ldots, A_n)$, let us write $U(\prof{A}) = \{x\in \alt \mid x\in A_i \text{ for every } i\in N\}$ for the set of unanimously approved alternatives. We now proceed to the proof.

(only if) Let $\prof{p}$ be  a profile for which every weakly unanimous rule is anchor-proof, and suppose for contradiction that $\prof{\pref}$ is not intolerant and it does not have a unique unanimously accepted alternative that is ranked first by every voter.
We will consider three cases, and on each of these three cases we will present a weakly unanimous voting rule $F$ that is not anchor-proof for $\prof{\pref}$.

\textbf{Case 1:} $\prof{p}$ is a profile with more than one unanimously accepted alternative.

By Proposition~\ref{prop:necessaryforinvariance}, we know that \text{AV} is not anchor-proof for~$\prof{p}$.

\textbf{Case 2:} $\prof{p}$ is a profile with a unique unanimously accepted alternative $x$, but $x$ is not ranked first by every voter.

We define the weakly unanimous rule $F$ such that
 $F(\prof{A})=U(\prof{A})$  if $U(\prof{A})\neq\emptyset$, and $F(\prof{A})=\alt$, otherwise.
Let $\prof{\sigma}$ be an order such that $\sigma_i^1=x$ for all $i\in N$.
Then we have that $F(\prof{A_{p,\sigma}})=\{x\}\neq \alt = F(\prof{A_{p,\underline{\sigma}}})$.

\textbf{Case 3:} $\prof{p}$ is a profile without a unanimously accepted alternative.

We define the weakly unanimous rule $F$ such that $F(\prof{A})=U(\prof{A})$ if $U(\prof{A})\neq\emptyset$, and otherwise $F(\prof{A})$ selects the largest approval ballot in $\prof{A}$ (in case of a tie, it selects the approval ballot of the voter with the minimal index).
Then for the profile $\prof{\pref}$ we know that $U(\prof{A_{\pref,\sigma}}) =\emptyset$ for all orders $\prof{\sigma}$. So we have that $|F(\prof{A_{p,\overline{\sigma}}})|=1$ (because under $\prof{\overline{\sigma}}$ all approval ballots are of size 1) and $|F(\prof{A_{p,\underline{\sigma}}})|>1$ (because the profile is not intolerant, so under $\prof{\underline{\sigma}}$ there exists at least one approval ballot of size larger than 1).

(if)
Let $\prof{p}$ be an intolerant profile. Then by Remark~\ref{rem:top-prof}, we have that every approval-based rule is anchor-proof for $\prof{p}$

Now let $\prof{p}$ be a profile with a unique unanimously accepted alternative $x$ that is ranked first by every voter.
Then, for all $i\in N$ and all $\sigma_i \in S(\alt)$, it is the case that $x\in A_{p_i,\sigma_i}$.
Thus we have that $U=\{x\}$, and by weak unanimity it holds that $F(\prof{A_{p,\sigma}})=\{x\}$ for all orders $\prof{\sigma}$, which means that $F$ is anchor-proof for $\prof{\pref}$.
\end{proof}

It should be noted that when adding total unanimity to the requirements of Proposition~\ref{prop:weakunacharacterisationprofile}, the result remains the same. 
This is because total unanimity only plays a role in the particular profile where every approval ballot is exactly~$\alt$.
The profiles employed in the characterization already omit this profile.

Together, the results of this section underline that although no non-constant approval-based rule is anchor-proof on the unrestricted domain, order invariance can arise at particular profiles. 

\subsection{Anchor-proofness under Partial Information}\label{sec:anchproofunderpartialinfo}

So far we have been able to find only a small number of profiles for which particular rules are anchor-proof.
Crucially, these results rest on a strong assumption about the information available to the planner: anchor-proofness often fails precisely because the planner might be able to select a specific presentation order relying on the intrinsic preferences of the voters.

Next, we turn to settings where the planner lacks full information about the voters’ preferences. Using the model of \citet{ReijngoudEndrissAAMAS2012}, we define a notion called `information function' $\info$ that maps each preference profile $\prof{\pref}$ to a type of information. 
We will study the following such functions:
\begin{itemize}
    \item \textbf{Zero:} $\info_{0}(\prof{\pref})$ returns a constant value, independently of $\prof{\pref}$; that is, it provides no useful information. 
    \item \textbf{Acceptability points:} $\info_{acc}(\prof{\pref})$ returns, for each alternative, the number of voters that consider it acceptable in $\prof{\pref}$. 
    \item \textbf{Acceptability sets:} $\info_{\{acc\}}(\prof{\pref})$ returns, for each alternative, the set of voters that consider it acceptable in $\prof{\pref}$. 
   \item \textbf{Plurality points:} $\info_{pl}(\prof{\pref})$ returns, for each alternative, the number of voters that rank it on top in $\prof{\pref}$. 
    \item \textbf{Plurality sets:} $\info_{\{pl\}}(\prof{\pref})$ returns, for each alternative, the set of voters that rank it on top in $\prof{\pref}$.
       \item \textbf{Full:} $\info_1(\prof{\pref})$ returns precisely the profile $\prof{\pref}$. 
\end{itemize}

Each information function gives rise to a set of profiles that the social planner considers possible, consistently with the provided information. Formally, the set of profiles that are considered possible given a certain profile $\prof{\pref} \in P(\alt)^N$ is defined as follows:
\[W_{\info(\prof{\pref})} = \{\prof{\pref'} \in P(\alt)^N \mid \info(\prof{\pref}) = \info(\prof{\pref'}) \} \]
For example, $W_{\info_0(\prof{\pref})} = P(\alt)^N$  and  \mbox{$W_{\info_1(\prof{\pref})}= \{\prof{\pref}\}$} for any profile~$\prof{\pref}$.
Next, Definition~\ref{def:info} formalizes the comparison between information functions in terms of the degree of information they provide.

\begin{defn} \label{def:info}
Let $\info$ and $\info'$ be two information functions. 
We call $\info$ \textbf{at least as informative as} $\info'$ if for every profile $\prof{\pref}$ it holds that $W_{\info(\prof{\pref})} \subseteq W_{\info'(\prof{\pref})}$. 
\end{defn}

Considering the information functions that we presented, $\info_0$ (respectively,  $\info_1$) is the least (respectively, the most) informative of all; $\info_{acc}$ is less informative than $\info_{\{acc\}}$; and 
$\info_{pl}$ is less informative than $\info_{\{pl\}}$.
Note that $\info_{pl}$ may at first appear at least as informative as  $\info_{acc}$, but
technically this is not true since $\info_{pl}$ restricts the alternatives that can be on the top of the profile, while $\info_{acc}$ restricts the acceptance thresholds of the voters.

Now, we assume that the social planner has a preference ranking $\succ_{sp}$ over all the possible outcomes of the aggregation process (that is, $\succ_{sp}\in L(2^{\alt})$).
Thus, she might have the incentive to show the alternatives in a specific order (which we call a \emph{strategy}) to obtain a more preferred outcome.

We ask whether, under partial information, the social planner can manipulate the voting outcome to her advantage.
Given a set of possible profiles $W\subseteq P(\alt)^N$ consistent with the information of the social planner and a voting rule, we call a strategy \emph{optimal} for the social planner if $(i)$ for every possible profile in $W$ it induces an outcome that is at least as good as the outcome of every other strategy, and $(ii)$ in some possible profile in $W$ it induces an outcome that is strictly better than the outcome of some other strategy.\footnote{Case $(ii)$ is important to avoid the scenario where all strategies give the same outcome in all possible profiles, meaning that none of them can be considered optimal.} If the social planner has an optimal strategy given her information, then she can manipulate the voting rule. Definition~\ref{def:manip-info} formalizes this idea.

\begin{defn} \label{def:manip-info}
    Let $\info$ be an information function and $F$ be a voting rule. We say that $F$ is \textbf{manipulable under $\boldsymbol{\info}$ on a profile~$\prof{\pref}$ by a social planner with preference $\succ_{sp}$} if there exists a strategy $\prof{\sigma^\ast} \in S(\alt)$ for which the following two conditions hold:
    \begin{itemize}
        \item for every profile $\prof{\pref'}\in W_{\info(\prof{\pref})}$, and for every order $\prof{\sigma}\neq\prof{\sigma^\ast}$, it is the case that $F(\prof{A_{\pref',\sigma^\ast}})\succeq_{sp} F(\prof{A_{\pref',\sigma}})$;
        \item for some profile $\prof{\pref'}\in W_{\info(\prof{\pref})}$, and for some order $\prof{\sigma}\neq\prof{\sigma^\ast}$, it is the case that $F(\prof{A_{\pref',\sigma^\ast}})\succ_{sp} F(\prof{A_{\pref',\sigma}})$. 
    \end{itemize} 
\end{defn}

The planner's instrument is therefore the presentation order rather than the voters' intrinsic preferences or the voting rule itself. By changing the order in which alternatives are encountered, the planner changes the approval ballots generated by the anchoring mechanism and thereby potentially changes the collective outcome.
We next define the general notion of manipulability for a voting rule~$F$.

\begin{defn} \label{def:manip}
  Let $\info$ be an information function and $F$ be a voting rule. We say that $F$ is \textbf{manipulable under} $\boldsymbol{\info}$ if it is manipulable on some profile~$\prof{\pref}$ by a social planner with some preference $\succ_{sp} \in L(2^{\alt})$. 
\end{defn}

 Definition~\ref{def:manip} is a direct analog of the definition of manipulability in voting under partial information by \cite{ReijngoudEndrissAAMAS2012}, which has been adopted in numerous other works since its original conception \citep[see, for example,][]{endriss2016strategic,Veselova2020,GORI2021manipunderincominfo}. It is a rather demanding notion, according to which a strategy will only be used if it is `safe', meaning that the planner will never risk obtaining a worse outcome in some possible scenario. The strength of this condition increases the relevance of the manipulability results that we will soon present.
We start with an intuitive observation: an increase on the degree of the planner's information benefits her ability to manipulate.

\begin{lemma} \label{lem:informative}
Take an information function $\info$ that is at least as informative as an information function $\info'$. Then, every voting rule $F$ that is manipulable under $\info'$ is also manipulable under $\info$. 
\end{lemma}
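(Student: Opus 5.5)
The plan is to transfer a manipulation witness for $\info'$ directly to $\info$. The key fact is that manipulability on a profile $\prof{\pref}$ depends on the information function only through the set of possible profiles $W_{\info(\prof{\pref})}$. Moreover, both conditions of Definition~\ref{def:manip-info} behave monotonically, though in different directions, when that set shrinks. Suppose $F$ is manipulable under $\info'$. Then there exist a profile $\prof{\pref}$, a planner preference $\succ_{sp}$, and a strategy $\prof{\sigma^\ast}$ satisfying both conditions with respect to $W_{\info'(\prof{\pref})}$. By Definition~\ref{def:info}, we have $W_{\info(\prof{\pref})} \subseteq W_{\info'(\prof{\pref})}$.

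The first condition transfers immediately. It is a universal statement over all profiles in $W_{\info'(\prof{\pref})}$ and all orders $\prof{\sigma}\neq\prof{\sigma^\ast}$, so it remains true over the subset $W_{\info(\prof{\pref})}$.

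The second condition is existential: it asks for some profile and some order under which $\prof{\sigma^\ast}$ is strictly better. Shrinking the set could discard the only witnessing profile, and this is the main obstacle. I would avoid it by choosing the base profile cleverly rather than keeping $\prof{\pref}$. Let $\prof{\pref'}\in W_{\info'(\prof{\pref})}$ and $\prof{\sigma}$ be the witnesses of strict improvement, so that $F(\prof{A_{\pref',\sigma^\ast}})\succ_{sp} F(\prof{A_{\pref',\sigma}})$. Now consider manipulability under $\info$ at the profile $\prof{\pref'}$ itself, with the same $\succ_{sp}$ and $\prof{\sigma^\ast}$. Since $\prof{\pref'}\in W_{\info'(\prof{\pref})}$, we have $\info'(\prof{\pref'})=\info'(\prof{\pref})$, hence $W_{\info'(\prof{\pref'})}=W_{\info'(\prof{\pref})}$.

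Next, use the fact that $\prof{\pref'}\in W_{\info(\prof{\pref'})}$ trivially. Condition (ii) at $\prof{\pref'}$ is then witnessed by $\prof{\pref'}$ and $\prof{\sigma}$. For condition (i), observe that
\[W_{\info(\prof{\pref'})}\subseteq W_{\info'(\prof{\pref'})}=W_{\info'(\prof{\pref})}.\]
Condition (i) holds on all of $W_{\info'(\prof{\pref})}$, so it holds on the subset $W_{\info(\prof{\pref'})}$. Therefore $F$ is manipulable under $\info$ on $\prof{\pref'}$ by a planner with preference $\succ_{sp}$, and hence manipulable under $\info$ in the sense of Definition~\ref{def:manip}.

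The only delicate points are two small facts: that $W_{\info'(\cdot)}$ is constant on its own cells, since it is defined as an equivalence class, and that every profile lies in its own information set. Both follow directly from the definition of $W_{\info(\prof{\pref})}$.
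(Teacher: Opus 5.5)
Your proof is correct and follows essentially the same route as the paper. You move the base profile to the witness $\prof{\pref'}$ of strict improvement, use $\prof{\pref'}\in W_{\info(\prof{\pref'})}$ for condition (ii), and use the chain $W_{\info(\prof{\pref'})}\subseteq W_{\info'(\prof{\pref'})}=W_{\info'(\prof{\pref})}$ for condition (i); your remark that keeping the original $\prof{\pref}$ could lose the existential witness is a useful explanation of why this move is needed.
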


\begin{proof}
    Since $F$ is manipulable under $\info'$ for a planner's preference $\succ_{sp}$, an optimal order $\prof{\sigma^\ast}$, and a profile $\prof{\pref}$, we know the following from the second condition of manipulability:
\[F(\prof{A_{\pref',\sigma^\ast}})\succ_{sp} F(\prof{A_{\pref',\sigma}}) \text{ for some order } \prof{\sigma}\neq\prof{\sigma^\ast} \text{ and profile } \prof{\pref'} \in W_{\info'(\prof{\pref})}\]  

  We will show that $F$ is manipulable under $\info$ for the same planner's preference $\succ_{sp}$ and optimal order $\prof{\sigma^*}$, and for the profile $\prof{\pref'}$.  
    Note that $\prof{\pref'} \in W_{\info(\prof{\pref'})}$ by definition. So the second condition of manipulability is also satisfied for $\info$ because of the following:
\[F(\prof{A_{\pref',\sigma^\ast}})\succ_{sp} F(\prof{A_{\pref',\sigma}}) \text{ for some order } \prof{\sigma}\neq\prof{\sigma^\ast} \text{ and profile } \prof{\pref'} \in W_{\info(\prof{\pref'})}\]     
 From the first condition of manipulability of $\info'$, the following  holds:
\[F(\prof{A_{\pref'',\sigma^\ast}})\succeq_{sp} F(\prof{A_{\pref'',\sigma}}) \text{ for all  orders } \prof{\sigma}\neq\prof{\sigma^\ast} \text{ and profiles } \prof{\pref''} \in W_{\info'(\prof{\pref})}\]
In addition, since $\info$ is at least as informative as $\info'$, we know that $W_{\info(\prof{\pref})} \subseteq W_{\info'(\prof{\pref})}$ for all profiles $\prof{\pref}$. Specifically, it holds that  $W_{\info(\prof{\pref'})} \subseteq W_{\info'(\prof{\pref'})}$. Moreover, the fact that $\prof{\pref'} \in W_{\info'(\prof{\pref})}$ implies that  $W_{\info'(\prof{\pref})} = W_{\info'(\prof{\pref'})} $ by definition.
So the first condition of manipulability is also satisfied for $\info$ because of the following:
\[F(\prof{A_{\pref'',\sigma^\ast}})\succeq_{sp} F(\prof{A_{\pref'',\sigma}}) \text{  for all  orders } \prof{\sigma}\neq\prof{\sigma^\ast} \text { and profiles } \prof{\pref''} \in W_{\info(\prof{\pref'})}\] 
Our proof is concluded.
\end{proof}

We first consider the extreme case in which the planner has no information about the voters' preferences. The planner must therefore choose the presentation orders without conditioning them on the realized preference profile. 

Theorem~\ref{thm:zero-info} shows that the planner's informational disadvantage is decisive. Without any information about voters' preferences, the planner cannot safely exploit the anchoring mechanism when the voting rule satisfies weak and total unanimity. The nomination rule satisfies the same zero-information robustness, as later shown in Proposition~\ref{prop:nom-info0manip}.

\begin{teo}\label{thm:zero-info}
Let $F$ be a voting rule that satisfies weak and total unanimity. Then, $F$ is not manipulable under $\info_{0}$.
\end{teo}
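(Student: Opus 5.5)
The plan is a proof by contradiction. Suppose $F$ satisfies weak and total unanimity and is manipulable under $\info_0$. Then there are a planner preference $\succ_{sp}$ and a strategy $\prof{\sigma^\ast}$ satisfying both conditions of Definition~\ref{def:manip-info}, with $W_{\info_0(\prof{\pref})}=P(\alt)^N$.

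By the second condition there is some order $\prof{\sigma}\neq\prof{\sigma^\ast}$ and some profile on which they give different outcomes. The first condition must then hold against this $\prof{\sigma}$, and against every other order, on \emph{all} profiles. I will exploit this by building profiles in which $\prof{\sigma^\ast}$ does badly relative to some competitor.

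The basic tool is the device used in Theorem~\ref{thm:imposs-invariant-prof} and Proposition~\ref{propi:impos-orderforallprofiles}. For any order $\prof{\tau}$, the tolerant profile in which each $\pref_i$ is the reversal of $\tau_i$ yields $\prof{A_{\pref,\tau}}=(\alt,\ldots,\alt)$. Total unanimity then gives outcome $\alt$. The argument has three steps.

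\textbf{Step 1: $\alt$ is not the planner's top outcome.} Fix $\prof{\sigma}\neq\prof{\sigma^\ast}$ and let $\prof{\pref}^2$ be the reversal of $\prof{\sigma}$, so $F(\prof{A_{\pref^2,\sigma}})=\alt$. I will make the $\prof{\sigma^\ast}$-ballots at $\prof{\pref}^2$ have a nonempty common part $U\neq\alt$, arguing as in Case~1 of Proposition~\ref{propi:impos-orderforallprofiles} and using Lemma~\ref{lem:order-switch}. Concretely, I will arrange that only one voter $i$ with $\sigma_i\neq\sigma^\ast_i$ submits a ballot different from $\alt$, while the others submit $\alt$ under both orders. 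Weak unanimity then gives $Z:=F(\prof{A_{\pref^2,\sigma^\ast}})\subseteq U\subsetneq\alt$. The first condition forces $Z\succeq_{sp}\alt$, and since $Z\neq\alt$ this is strict: $Z\succ_{sp}\alt$. Hence $\alt$ is not the planner's top outcome.

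\textbf{Step 2: realise $Z$ on a profile where $\prof{\sigma^\ast}$ gives $\alt$.} Let $\prof{\pref}^3$ be the reversal of $\prof{\sigma^\ast}$, so that $\prof{\sigma^\ast}$ yields $\alt$. I then seek an order $\prof{\sigma'}$ with $F(\prof{A_{\pref^3,\sigma'}})=Z$, or more generally $F(\prof{A_{\pref^3,\sigma'}})\succ_{sp}\alt$. That would violate the first condition and give the contradiction.

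Remark~\ref{rem:power} says that under a tolerant preference any ballot containing the top alternative can be produced. The top of $\pref^3_i$ is $\sigma^{\ast m}_i$, the last alternative in $\sigma^\ast_i$. So I would choose $\prof{\sigma'}$ to reproduce the same approval profile $\prof{A_{\pref^2,\sigma^\ast}}$, possibly enlarged by each $\sigma^{\ast m}_i$. I would then use weak unanimity to pin its outcome inside the same set $U$.

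\textbf{Step 3 (main obstacle): matching the outcome exactly.} Weak unanimity only gives $F(\cdot)\subseteq U$, not the exact outcome $Z$, so $F$ at the new ballot profile might fall on something worse than $\alt$. My first attempt is to go back to Step~1 and choose $\prof{\pref}^2$ so that the deviating voter's ballot already contains $\sigma^{\ast m}_i$. That ballot is $A_{\pref^2_i,\sigma^\ast_i}$, and I would pick $\pref^2_i$ non-tolerant, or with suitable top, via Lemma~\ref{lem:order-switch}. Then the ballot profile at $\prof{\pref}^3$ under $\prof{\sigma'}$ can be made literally identical to $\prof{A_{\pref^2,\sigma^\ast}}$, and its outcome is exactly $Z\succ_{sp}\alt$.

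If exact matching fails, I would instead iterate. I would take $Z$ to be the $\succ_{sp}$-maximal outcome among all those $\prof{\sigma^\ast}$ produces on profiles where some competitor yields $\alt$. I would then derive the contradiction by comparing $\prof{\sigma^\ast}$ with a competitor at a profile where $\prof{\sigma^\ast}$ gives $\alt$ and that competitor gives $Z$.
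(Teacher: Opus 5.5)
Your overall plan is sound, and it mirrors the paper's argument. You use total unanimity at a ``reversal'' profile to force outcome $\alt$ under one order, and weak unanimity to force a proper subset $Z$ under the other. You then transplant the same approval profile into a second profile with the roles of $\prof{\sigma^\ast}$ and the competitor swapped. The first manipulability condition then yields both $Z\succ_{sp}\alt$ and $\alt\succ_{sp}Z$.

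There is a genuine gap, though. The step everything hinges on is realizing $\prof{A_{\pref^2,\sigma^\ast}}$ exactly as $\prof{A_{\pref^3,\sigma'}}$, and you leave it open. The repairs you suggest do not work:
\begin{itemize}
\item Making $\pref^2_i$ non-tolerant breaks Step~1. Then $A_{\pref^2_i,\sigma_i}\neq\alt$, so total unanimity no longer gives outcome $\alt$ under $\prof{\sigma}$.
\item Lemma~\ref{lem:order-switch} does not help. It transfers a ballot between two preferences under a fixed pair of orders, and says nothing about whether $\sigma^{\ast m}_i$ is approved.
\item The ``iterate'' fallback faces exactly the same realizability question.
\end{itemize}
The real constraint is this. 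If $\pref^2_i$ is tolerant with $A_{\pref^2_i,\sigma_i}=\alt$, then $\pref^2_i$ must be the reversal of $\sigma_i$. In that case $\sigma^{\ast m}_i\in A_{\pref^2_i,\sigma^\ast_i}$ holds if and only if $\sigma^{\ast m}_i$ is the top of $\pref^2_i$, that is, if and only if $\sigma^m_i=\sigma^{\ast m}_i$.

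So the missing idea is to restrict the competitor order, not the profile. The first condition quantifies over all $\prof{\sigma}\neq\prof{\sigma^\ast}$, so you may choose $\prof{\sigma}$ to differ from $\prof{\sigma^\ast}$ at a single voter $i$, by swapping the first two alternatives of $\sigma^\ast_i$. Any change that keeps the last position fixed also works. Then:
\begin{itemize}
\item The ballot $B=A_{\pref^2_i,\sigma^\ast_i}$ is a nonempty proper subset of $\alt$ containing $\sigma^{\ast m}_i$, which is the top of $\pref^3_i$.
\item By the first bullet of Remark~\ref{rem:power}, some $\sigma'_i$ gives $A_{\pref^3_i,\sigma'_i}=B$.
\item Setting $\sigma'_k=\sigma^\ast_k$ for $k\neq i$ gives $\prof{A_{\pref^3,\sigma'}}=\prof{A_{\pref^2,\sigma^\ast}}$ and $\prof{\sigma'}\neq\prof{\sigma^\ast}$, hence $\alt\succ_{sp}Z$, the contradiction you want.
\end{itemize}

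This is exactly where $m>2$ enters, and some step must use it, because the statement fails for $m=2$. With $\alt=\{a,b\}$, showing $a$ first to every voter maximizes $a$'s approval count and minimizes $b$'s in every profile. That is a safe manipulation of \text{AV} for a planner with $\{a\}\succ_{sp}\{a,b\}\succ_{sp}\{b\}$. Your proposal never invokes $m>2$, which confirms the gap is real.

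The paper handles the same issue from the other side. It starts at the reversal of $\prof{\sigma^\ast}$ and swaps the last two alternatives for one voter, so every competitor ballot still contains $\sigma^{\ast 1}_i$ (again using $m>2$). Those ballots can then be regenerated under $\prof{\sigma^\ast}$ by the third bullet of Remark~\ref{rem:power}.
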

\begin{proof}
Aiming for a contradiction, suppose that $\prof{\sigma^\ast}$ is an optimal strategy for a social planner with preference $\succ_{sp}$, who has zero information.
Now consider the tolerant profile $\prof{\pref}$ such that $A_{\pref_i,\sigma^\ast_i}=\alt$ for all $i\in N$ (this can be done by Remark~\ref{rem:power}).
By total unanimity we have that $F(\prof{A_{\pref,\sigma^\ast}})=\alt$.
Fix a voter~$j$, and let $\prof{\sigma}\neq\prof{\sigma^\ast}$ be an order such that $\sigma_i=\sigma_i^{\ast}$ for all $i\in N\setminus \{j\}$, where $\sigma_j$ and $\sigma_j^{\ast}$ differ only in the last two showed alternatives. 
Thus, we have that $A_{\pref_i,\sigma_i}=\alt$ for all $i\in N\setminus \{j\}$, and $A_{\pref_j,\sigma_j}=\alt\setminus \{x\}$ for some alternative $x$.
By weak unanimity, we have that $F(\prof{A_{\pref,\sigma}})=A\subseteq\alt\setminus \{x\}$.
By optimality of $\prof{\sigma^\ast}$, it is the case that $\alt\succ_{sp}A$.

Since $\sigma_i^{\ast1}\in A_{p_i,\sigma_i}$ for all $i\in N$ and $m>2$, it is possible, by Remark~\ref{rem:power}, to find a profile $\prof{\pref'}$ such that $\prof{A_{\pref',\sigma^\ast}}=\prof{A_{\pref,\sigma}}$.
Thus, $F(\prof{A_{\pref',\sigma^\ast}})=A$.
Again, by Remark~\ref{rem:power}, we can find an order $\prof{\sigma'}$ such that $A_{\pref'_i,\sigma'_i}=\alt$ for all $i\in N$, and by total unanimity, $F(\prof{A_{\pref',\sigma'}})=\alt$.
By optimality of $\prof{\sigma^\ast}$, we have that $A\succ_{sp}\alt$.
Thus, we reach a contradiction.
\end{proof}

Since \text{AV} satisfies weak and total unanimity, the following holds:

\begin{coro}
    \text{AV} is not manipulable under $\info_0$.
\end{coro}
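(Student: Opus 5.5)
The corollary follows by checking that \text{AV} satisfies the hypotheses of Theorem~\ref{thm:zero-info} and then invoking that theorem. So the plan has two short steps: verify weak unanimity and total unanimity for \text{AV}, then apply the theorem with $F=\text{AV}$.

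For total unanimity, take the approval profile with $A_i=\alt$ for all $i\in N$. Every alternative then receives exactly $n$ approvals. Hence every alternative attains the maximum, and $\text{AV}(\alt,\ldots,\alt)=\alt$.

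For weak unanimity, take any approval profile $\prof{A}$ whose set $U$ of unanimously approved alternatives is non-empty. Every $x\in U$ receives $n$ approvals, which is the largest possible count. Every $y\notin U$ is missing from at least one ballot, so it receives at most $n-1<n$ approvals. The argmax is therefore exactly $U$, so \text{AV} in fact satisfies unanimity and in particular weak unanimity.

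With both axioms in place, Theorem~\ref{thm:zero-info} applies directly and shows that \text{AV} is not manipulable under $\info_0$. There is no real obstacle here. The only point needing care is the strict inequality $n-1<n$, which guarantees that no non-unanimous alternative ties with the unanimous ones; this relies on $N$ being non-empty, which holds by assumption.
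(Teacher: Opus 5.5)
Your proposal is correct and matches the paper's argument: the corollary is just Theorem~\ref{thm:zero-info} applied to \text{AV}, and the paper justifies it by noting (in Section~\ref{sec:model}) that \text{AV} satisfies unanimity and hence weak and total unanimity. Your explicit check of those two axioms fills in what the paper states without proof.
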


The zero-information result raises the question of how much information is required for manipulation to become possible. We next show that full information is not necessary: even coarse aggregate information about voters' preferences can restore the planner's ability to manipulate the collective outcome. Theorem~\ref{thm:manip-unanimous} specifically concerns all weakly unanimous rules that are also anonymous and neutral. 
Note that if we do not consider anonymity and neutrality, we need to include some ad-hoc condition ensuring that the rule produces at least two different outcomes within some interesting sets of possible profiles.

\begin{teo} \label{thm:manip-unanimous}
    Let $F$ be a voting rule that satisfies weak unanimity, anonymity, and neutrality. 
    Then, $F$ is manipulable under $\info_{acc}$.
\end{teo}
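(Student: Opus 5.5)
The plan is to choose the acceptability information so that the set of profiles the planner considers possible is small and highly structured. Within that set, ``show $x$ first'' will be a dominant strategy for a suitably chosen preference $\succ_{sp}$.

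\textbf{Choice of information.} Fix two alternatives $x \neq y$ and an integer $1 \leq k \leq n$. Let $\info_{acc}(\prof{\pref})$ be such that $acc_{\prof{\pref}}(x)=n$, $acc_{\prof{\pref}}(y)=k$, and $acc_{\prof{\pref}}(z)=0$ for every other $z$. Every voter's top alternative is acceptable, so every $\prof{\pref'} \in W_{\info_{acc}(\prof{\pref})}$ has the following form: exactly $k$ voters have $\mathit{ACC}=\{x,y\}$ with top $x$ or $y$, and the remaining $n-k$ voters are intolerant with top $x$. Alternatives other than $x,y$ are never approved, so for $m>2$ they are irrelevant.

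\textbf{Reachable ballots.} The intolerant voters always submit $\{x\}$. A voter accepting $\{x,y\}$ can only submit $\{x\}$, $\{x,y\}$ or $\{y\}$:
\begin{itemize}
\item $\{x\}$ arises when her top is $x$ and she sees $x$ before $y$;
\item $\{x,y\}$ arises when she sees her second-best alternative first;
\item $\{y\}$ arises when her top is $y$ and she sees $y$ first.
\end{itemize}

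\textbf{Candidate strategy.} Let $\prof{\sigma^\ast}$ show $x$ first to everybody. Under $\prof{\sigma^\ast}$, $x$ is unanimously approved in every possible profile. By weak unanimity, $F(\prof{A_{\pref',\sigma^\ast}}) \subseteq \{x\} \cup \{y : \text{all ballots contain } y\}$.

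By anonymity and neutrality (applied to permutations of $\alt$ fixing $x$ and $y$), $F$ on these approval profiles depends only on the numbers of ballots equal to $\{x\}$, $\{x,y\}$ and $\{y\}$, and its output is a subset of $\{x,y\}$ up to a symmetric block on the other alternatives. Consequently, the outcomes obtainable by any order in any possible profile form a finite list indexed by these counts.

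\textbf{Defining $\succ_{sp}$.} I would rank the outcomes reachable under $\prof{\sigma^\ast}$ above everything else. Among them, the ordering is arranged so that, for each possible profile, the $\prof{\sigma^\ast}$-outcome is the best among those reachable by any order in that profile. Because voters with top $x$ can only move from $\{x\}$ to $\{x,y\}$ by deviating, and voters with top $y$ can only move from $\{x,y\}$ to $\{y\}$, this amounts to choosing a linear order compatible with ``fewer deviations is better.'' This works provided the deviation outcomes do not coincide with $\prof{\sigma^\ast}$-outcomes in a cyclic way.

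\textbf{Main obstacle: strictness.} The hard part is the second condition of Definition~\ref{def:manip-info}, namely that $F$ is not constant over the reachable profiles. For instance, with $k=1$ and $n\geq 3$, AV returns $\{x\}$ everywhere. To force a difference I would use neutrality via a symmetric profile. Take $n$ even and $k=n/2$ with all $k$ voters having top $y$, and show them $y$ first. The resulting profile $(\{x\}^{n/2},\{y\}^{n/2})$ is mapped to itself, up to anonymity, by the swap $x \leftrightarrow y$, so $F$ contains $x$ iff it contains $y$. Under $\prof{\sigma^\ast}$, by contrast, weak unanimity forces an outcome inside $\{x\}$, since $y$ is not unanimous when $n-k \geq 1$. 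Hence the two outcomes can coincide only if both are empty.

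Two cases remain, and I expect this case analysis to be the bulk of the work:
\begin{itemize}
\item If both outcomes are empty, I would compare instead with the all-$\{x\}$ profile, choosing $k$ or the tops differently.
\item For odd $n$, I would take $k=(n+1)/2$ and use the profile where one top-$y$ voter submits $\{x,y\}$; this restores the $x\leftrightarrow y$ symmetry.
\end{itemize}
Once a pair of distinct outcomes is found, I would place the $\prof{\sigma^\ast}$-outcome strictly above the other in $\succ_{sp}$, which yields manipulability under $\info_{acc}$.
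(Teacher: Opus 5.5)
Your proposal never establishes the first condition of Definition~\ref{def:manip-info}. You only say a suitable $\succ_{sp}$ exists ``provided the deviation outcomes do not coincide with $\prof{\sigma^\ast}$-outcomes in a cyclic way''. You also defer the case where both compared outcomes are empty to an unspecified case analysis. In the generality you work in, where $F$ may return $\emptyset$, neither step can be completed.

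The constant rule $F\equiv\emptyset$ satisfies weak unanimity, anonymity and neutrality. It admits no strict comparison, so it is not manipulable under any information.

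The cyclic obstruction also occurs for non-constant rules. Let $U(\prof A)$ be the set of unanimously approved alternatives, and define:
\begin{itemize}
\item $F(\prof A)=U(\prof A)$ if $U(\prof A)\neq\emptyset$ and the number of non-singleton ballots is even;
\item $F(\prof A)=\emptyset$ if $U(\prof A)\neq\emptyset$ and that number is odd;
\item $F(\prof A)=\alt$ if $U(\prof A)=\emptyset$.
\end{itemize}
This rule satisfies all three axioms. Take $2\le k<n$ (e.g.\ your $k=n/2$ with $n\ge 4$).
\begin{itemize}
\item In the possible profile with no top-$y$ voter, condition one forces $F(\{x\}^n)=\{x\}\succeq_{sp}F(\{x\}^{n-1},\{x,y\})=\emptyset$.
\item In the possible profile with one top-$y$ voter, it forces $\emptyset\succeq_{sp}F(\{x\}^{n-2},\{x,y\}^2)=\{x\}$.
\end{itemize}
So no linear $\succ_{sp}$ makes your $\prof{\sigma^\ast}$ optimal.

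The missing ingredient is that $F$ returns a non-empty set. The paper's proof uses this tacitly when it says $a$ ``is selected by weak unanimity''. With it, your argument closes without any case analysis. Since $k<n$, there is an intolerant voter, so under $\prof{\sigma^\ast}$ the unanimously approved set is exactly $\{x\}$ in every possible profile. Hence $F(\prof{A_{\pref',\sigma^\ast}})=\{x\}$, and putting $\{x\}$ at the top of $\succ_{sp}$ gives condition one outright.

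Your swap-symmetric profiles then give condition two: $k=n/2$ for even $n$, and $k=(n+1)/2$ with one $\{x,y\}$ ballot for odd $n\ge 3$. They yield an outcome that contains $x$ iff it contains $y$, hence differs from $\{x\}$. The case $n=1$ needs separate treatment: there $F(A)=A$ is forced, and $k=1$ with $\{x\}\succ_{sp}\{x,y\}\succ_{sp}\{y\}$ works.

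Repaired this way, your proof follows the paper's idea: show the target first so weak unanimity pins the outcome, then use anonymity and neutrality to exhibit a different outcome. Only the information set differs. The paper gives $a$ $n$ acceptability points and every other alternative $n-1$. It then appeals to a profile where each alternative is excluded by exactly one ballot, whose claimed outcome $\alt$ relies on a full symmetry available only when $n\ge m$. Your two-alternative swap is the cleaner witness.
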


\begin{proof}
    Consider a profile~$\prof{\pref}$ where a unique alternative $a$ has $n$ acceptability points and all other alternatives have $n-1$ acceptability points. Take a social planner with preference $\succ_{sp}$ that ranks $\{a\}$ first. Then, the order $\prof{\sigma}$ such that $\sigma_i^1 = a$ for all $i\in N$ is an optimal strategy for the social planner. 
    Indeed, the first condition of manipulability is satisfied because $F(\prof{A_{\pref,\sigma}}) = \{a\} \succeq_{sp} X'$, for every $X'\subseteq \alt$ that can be the outcome of the that profile under another order, as $a$ is selected by weak unanimity.
    For the second condition of manipulability, we need to find a profile with the same acceptability points as $\prof{\pref}$ for all alternatives, in which some order produces an outcome different from $\{a\}$. For this purpose, consider a profile $\prof{\pref'}$ together with an order $\prof{\sigma'}$ such that each alternative is excluded by exactly one approval ballot.
 Then, anonymity and neutrality imply that $F(\prof{A_{\pref',\sigma'}}) = \alt$, and we are done.
\end{proof}

\begin{teo}\label{thm:manip-weakandtotalunaniinfoplu}
    Let $F$ be a voting rule that satisfies weak and total unanimity. 
    Then, $F$ is manipulable under $\info_{pl}$.
\end{teo}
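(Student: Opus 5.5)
The plan is to take a profile in which plurality information pins down every voter's top alternative. Then the planner can show that top alternative first to every voter, without knowing anything else. Concretely, fix an alternative $a\in\alt$ and a profile $\prof{\pref}$ with $\pref_i^1=a$ for all $i\in N$. Then $plur_{\prof{\pref}}(a)=n$ and $plur_{\prof{\pref}}(y)=0$ for every $y\neq a$. So $W_{\info_{pl}(\prof{\pref})}$ is exactly the set of profiles (with arbitrary thresholds and arbitrary rankings below the top) in which every voter ranks $a$ first.

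The strategy is the order $\prof{\sigma^\ast}$ with $\sigma_i^{\ast 1}=a$ for all $i\in N$. By the anchoring procedure (as in the first point of Remark~\ref{rem:top-prof}), for every $\prof{\pref'}\in W_{\info_{pl}(\prof{\pref})}$ we get $\prof{A_{\pref',\sigma^\ast}}=(\{a\},\ldots,\{a\})$. Hence the outcome $S:=F(\{a\},\ldots,\{a\})$ is the same across all possible profiles, and by weak unanimity $S\subseteq\{a\}$. Fix a planner preference $\succ_{sp}\in L(2^{\alt})$ that ranks $S$ first. The first condition of Definition~\ref{def:manip-info} then holds trivially: for every possible profile, $\prof{\sigma^\ast}$ yields the planner's top outcome, which is weakly better than anything another order can yield.

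For the second condition, take the tolerant profile $\prof{\pref'}=\prof{\mathit{tol}(\pref)}$, which is still in $W_{\info_{pl}(\prof{\pref})}$ since tops are unchanged. Use the order $\prof{\underline{\sigma}}$ that presents each voter's alternatives from least to most preferred; it differs from $\prof{\sigma^\ast}$ because $m>2$. Every ballot is then $\alt$, so total unanimity gives $F(\prof{A_{\pref',\underline{\sigma}}})=\alt$. Since $|\alt|=m>2$ and $|S|\le 1$, we have $\alt\neq S$, hence $S\succ_{sp}\alt$, and manipulability follows.

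The only delicate point is that weak unanimity gives $S\subseteq\{a\}$ but does not guarantee $S=\{a\}$. I avoid this by defining the planner's preference with respect to $S$ itself rather than $\{a\}$. This is legitimate because Definition~\ref{def:manip} only requires manipulability for some planner preference.
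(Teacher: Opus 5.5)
Your proof is correct and follows essentially the same route as the paper's: the profile where every voter ranks $a$ first, the strategy showing $a$ first to everyone, and the tolerant version of that profile under $\prof{\underline{\sigma}}$, where total unanimity gives $\alt$ as the strictly worse outcome. Your one departure, letting the planner rank $S=F(\{a\},\ldots,\{a\})$ first rather than $\{a\}$, is a genuine small improvement: the paper asserts $F(\prof{A_{\pref,\sigma}})=\{a\}$ ``by weak unanimity'', but weak unanimity only gives $S\subseteq\{a\}$, and the codomain $2^{\alt}$ allows $S=\emptyset$.
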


\begin{proof}
    Consider a profile~$\prof{\pref}$ where a unique alternative $a$ has $n$ plurality points. 
    Take a social planner with preference $\succ_{sp}$ that ranks $\{a\}$ first. 
    Then, the order~$\prof{\sigma}$ such that $\sigma_i^1 = a$ for all $i\in N$ is an optimal strategy for the social planner. 
    Indeed, the first condition of manipulability is satisfied because $F(\prof{A_{\pref,\sigma}}) = \{a\}$ by weak unanimity and $ \{a\} \succeq_{sp} X'$, for every $X'\subseteq \alt$ that can be the outcome of that profile $\prof{\pref}$ under another order.
    For the second condition of manipulability, we need to find a profile with the same plurality points as $\prof{\pref}$ for all alternatives, in which some order produces an outcome different from $\{a\}$. 
    For this purpose, consider a tolerant profile $\prof{\pref'}$ where all voters hold the same preferences ranking $a$ first and consider the order $\underline{\prof{\sigma}}$.
    Then, by total unanimity, $F(\prof{A_{\pref',\underline{\prof{\sigma}}}}) = X$, and we are done.
\end{proof}

Corollary~\ref{cor:SAV-manip} holds because \text{AV} satisfies the conditions of Theorems~\ref{thm:manip-unanimous} and~\ref{thm:manip-weakandtotalunaniinfoplu}, and Lemma~\ref{lem:informative} implies that manipulability pursues as long as the information of the planner increases.

\begin{coro} \label{cor:SAV-manip}
 $\text{AV}$ is manipulable under $\info_{acc}$ and $\info_{pl}$ (as well as under $\info_{\{ acc\}}$ and  $\info_{\{pl\}}$).
\end{coro}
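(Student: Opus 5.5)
The corollary is a direct consequence of three earlier results, so the plan is to combine them rather than build anything new.

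First, I will check that \text{AV} satisfies the hypotheses of both theorems. \text{AV} is anonymous, since approval counts do not depend on voter names. It is neutral, since relabeling alternatives relabels the counts and hence the argmax. It satisfies weak unanimity: when some alternative is approved by all $n$ voters, the maximal count is $n$, and the winners are exactly the unanimously approved alternatives. For the same reason it satisfies unanimity. It satisfies total unanimity, because if every ballot equals $\alt$ then every alternative has $n$ points.

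Second, with these properties in hand, Theorem~\ref{thm:manip-unanimous} gives manipulability of \text{AV} under $\info_{acc}$. Theorem~\ref{thm:manip-weakandtotalunaniinfoplu} gives manipulability under $\info_{pl}$.

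Third, for the set-valued information functions I will apply Lemma~\ref{lem:informative}. This requires showing that $\info_{\{acc\}}$ is at least as informative as $\info_{acc}$, and likewise $\info_{\{pl\}}$ relative to $\info_{pl}$, in the sense of Definition~\ref{def:info}. Take a profile $\prof{\pref'}$ that has the same acceptability sets as $\prof{\pref}$. Then for each alternative it has the same acceptability count, namely the cardinality of that set. Hence $W_{\info_{\{acc\}}(\prof{\pref})}\subseteq W_{\info_{acc}(\prof{\pref})}$, and the plurality case is identical. Applying Lemma~\ref{lem:informative} then transfers manipulability to both set-valued functions.

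There is no real obstacle. The only point needing care is the informativeness inclusion, which holds because counts are functions of sets.
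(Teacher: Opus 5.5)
Your proposal is correct and follows the paper's argument: AV satisfies the hypotheses of Theorems~\ref{thm:manip-unanimous} and~\ref{thm:manip-weakandtotalunaniinfoplu}, and Lemma~\ref{lem:informative} then transfers manipulability to $\info_{\{acc\}}$ and $\info_{\{pl\}}$. The one addition is that you justify the inclusion $W_{\info_{\{acc\}}(\prof{\pref})}\subseteq W_{\info_{acc}(\prof{\pref})}$ and its plurality analogue, which the paper only asserts.
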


We continue with exploring the degree of information needed for the nomination rule to become manipulable.
First, we show that when no information is available, Nom cannot be manipulated.

\begin{propi} \label{prop:nom-info0manip}
Nom is not manipulable under $\info_{0}$.   
\end{propi}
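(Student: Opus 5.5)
The proof will mirror Theorem~\ref{thm:zero-info}: it will show that under $\info_0$ no strategy can satisfy both conditions of Definition~\ref{def:manip-info}. Nom fails weak unanimity, so instead we will use that Nom returns $\alt$ whenever some voter approves everything, and that Nom is monotone in ballots.

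Suppose for contradiction that $\prof{\sigma^\ast}$ is optimal for a planner with preference $\succ_{sp}$ under zero information, so $W=P(\alt)^N$. We build two possible profiles that force opposite strict comparisons between $\alt$ and some proper subset $A\subsetneq\alt$.

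\emph{Step 1.} By Remark~\ref{rem:power}, take a tolerant profile $\prof{\pref}$ with $A_{\pref_i,\sigma^\ast_i}=\alt$ for all $i$, so $\text{Nom}(\prof{A_{\pref,\sigma^\ast}})=\alt$. We need another order giving a proper subset. Concretely, take every $\pref_i$ equal to the reverse of $\sigma^\ast_i$. Each voter sees her alternatives from worst to best and approves all of them. Let $x=\sigma^{\ast 1}_1$, which is voter~1's least preferred alternative. Now choose $\prof{\sigma}$ that shows every voter her own top first. Then every ballot is a singleton $\{\pref_i^1\}$, and the outcome is $A=\{\pref_i^1: i\in N\}$. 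Since all voters rank $x$ last and $m>2$, $x\notin A$, so $A\ne\alt$. Optimality then gives $\alt\succ_{sp}A$.

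\emph{Step 2.} Construct $\prof{\pref'}$ such that $\prof{A_{\pref',\sigma^\ast}}$ has Nom-outcome exactly $A$, but some other order yields $\alt$. Using the second bullet of Remark~\ref{rem:power}, which allows non-tolerant preferences, we can realize any ballot under $\sigma^\ast_i$. Give voter~1 the ballot $A$ and every other voter a singleton $\{a\}$ with $a\in A$, so that Nom returns $A$. Then make voter~1's preference tolerant enough: if she accepts all alternatives, some order produces ballot $\alt$ for her (via $\underline{\sigma_1}$), and Nom returns $\alt$. This requires realizing ballot $A$ under $\sigma^\ast_1$ with a tolerant preference, which by the third bullet of Remark~\ref{rem:power} is possible provided $\sigma^{\ast 1}_1\in A$.

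\emph{Main obstacle.} The third bullet needs $\sigma^{\ast1}_1\in A$, but Step~1 chose $x=\sigma^{\ast1}_1\notin A$. So $A$ has to be chosen differently. Instead, in Step~1 make the deviation differ only in the last two positions for voter~1, as in Theorem~\ref{thm:zero-info}, and make all voters identical, each ranking voter~1's order reversed. Then $A=\alt\setminus\{y\}$ with $y$ one of the last two alternatives of $\sigma^\ast_1$. Since $m>2$, $A$ contains $\sigma^{\ast1}_1$. With identical voters, every ballot omits $y$, giving Nom outcome $\alt\setminus\{y\}\neq\alt$ and hence $\alt\succ_{sp}A$.

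In Step~2, let every voter realize ballot $A$ under $\sigma^\ast_i$ with a tolerant preference. Each such $A$ contains $\sigma^{\ast1}_i$, after relabeling by choosing the omitted $y$ outside all voters' first-shown alternatives, which is possible if $n<m-1$. Otherwise, use non-tolerant preferences for the other voters with singleton ballots inside $A$. Then Nom$(\prof{A_{\pref',\sigma^\ast}})=A$, while $\underline{\prof{\sigma}}$ yields $\alt$, so $A\succ_{sp}\alt$. This contradicts Step~1. The delicate bookkeeping is making $A$ identical in both steps, and we will handle it by fixing $A=\alt\setminus\{y\}$ from voter~1's construction and filling the other voters with singletons.
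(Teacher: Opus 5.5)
\textbf{Step~1 fails as written.} Your overall plan is sound: contradict the optimality of $\prof{\sigma^\ast}$ with two possible profiles that force $\alt\succeq_{sp}A$ and $A\succeq_{sp}\alt$ for one proper subset $A$. But Step~1 in its final form does not work.

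If the deviation changes only voter~1's order and keeps $\sigma_j=\sigma^\ast_j$ for $j\neq 1$, identical \emph{preferences} do not give identical \emph{ballots}. Write $\sigma^\ast_1=(s_1,\ldots,s_m)$, so every voter ranks $s_m\succ s_{m-1}\succ\cdots\succ s_1$ and $y=s_{m-1}$. A voter $j\neq 1$ approves $s_{m-1}$ under $\sigma^\ast_j$ whenever $\sigma^\ast_j$ shows $s_{m-1}$ before $s_m$. In that case the Nom outcome under the deviation is still $\alt$, and you get no comparison. Unlike in Theorem~\ref{thm:zero-info}, there is no weak unanimity that lets a single ballot exclude $y$.

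The repair is immediate. Either let the deviation show \emph{every} voter $(s_1,\ldots,s_{m-2},s_m,s_{m-1})$, which still differs from $\prof{\sigma^\ast}$ in voter~1's coordinate. Or, more cleanly, use in Step~1 the device you already use in Step~2: voter~1 is tolerant with the reverse of $\sigma^\ast_1$, and every other voter is intolerant with top $s_1$. Nom then gives $\alt$ under $\prof{\sigma^\ast}$ and $\alt\setminus\{s_{m-1}\}$ after swapping voter~1's last two positions.

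In Step~2, drop the case split on $n<m-1$ and the ``relabeling'' of $y$, since $y$ is no longer free once Step~1 fixes it. The singleton filling works for every $n$. Give voter~1 the tolerant ranking $s_m\succ s_{m-2}\succ\cdots\succ s_1\succ s_{m-1}$. It yields $\alt\setminus\{s_{m-1}\}$ under $\sigma^\ast_1$ (here $m>2$ gives $s_1\neq s_{m-1}$) and $\alt$ under $\underline{\sigma_1}$. Keep the other voters at $\{s_1\}$. With these changes, the first optimality condition alone gives $\alt\succeq_{sp}\alt\setminus\{s_{m-1}\}$ and the reverse, which is impossible for a linear order on $2^{\alt}$.

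\textbf{Comparison with the paper.} The repaired argument takes a genuinely different route from the paper's. The paper works locally with the first three alternatives $x,y,z$ of one voter's order $\sigma^\ast_i$. All other voters accept only $x$, and voter~$i$ accepts exactly $\{x,y,z\}$. Ranking $z\succ y\succ x$ makes $\prof{\sigma^\ast}$ yield $\{x,y,z\}$, while $(x,z,\ldots)$ yields $\{x,z\}$. Ranking $z\succ x\succ y$ makes $\prof{\sigma^\ast}$ yield $\{x,z\}$, while $(y,x,z,\ldots)$ yields $\{x,y,z\}$. (The paper's text misprints two of these sets.) It therefore needs neither tolerant preferences nor Remark~\ref{rem:power}. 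Your version instead toggles between $\alt$ and $\alt$ minus one alternative, mirroring Theorem~\ref{thm:zero-info}, with the neutralized singleton voters doing the job weak unanimity does there. This makes the parallel between the two zero-information results transparent, at the cost of the extra ballot bookkeeping you ran into.
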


\begin{proof}
Aiming for a contradiction, suppose that $\prof{\sigma^\ast}$ is an optimal strategy for a social planner with preference $\succ_{sp}$, who has zero information.
Wlog, suppose that $\sigma^\ast_i=(x,y,z,\ldots)$.
Consider the profile $\prof{p}$ such that all the voters but $i$ accept only alternative $x$, while voter $i$ ranks $z$ first, $y$ second, and $x$ third; and accepts only those three alternatives.
Then $\text{Nom}(\prof{A_{\pref,\sigma^\ast}})=\{x,y,z\}$, and by optimality, $\{x,y\}\succeq_{sp} X'$ for every $X'\subseteq \alt$.
In particular, for $X'=\{x,z\}$, that is the outcome obtained when $\sigma_i=(x,z,\ldots)$ is used.
Now consider $\prof{p'}$ such that all the voters but $i$ accept only alternative $x$, while voter $i$ ranks $z$ first, $x$ second, and $y$ third, and accepts only those three alternatives.
Then $\text{Nom}(\prof{A_{\pref,\sigma^\ast}})=\{x,y\}$, and by optimality, $\{x,z\}\succeq_{sp} \{x,y,z\}$, the outcome obtained when $\sigma_i=(y,x,z,\ldots)$ is used.
Thus, we reach a contradiction, showing that Nom is not manipulable under $\info_{0}$.
\end{proof}

\begin{propi} \label{prop:nom-manip}
Nom is manipulable under $\info_{acc}$ (as well as under $\info_{\{acc\}}$).   Nom is also manipulable under $\info_{pl}$ (as well as under $\info_{\{pl\}}$). 
\end{propi}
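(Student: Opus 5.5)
We will exhibit, for each of the two information functions $\info_{acc}$ and $\info_{pl}$, one profile and one planner preference for which Nom is manipulable. The claims for $\info_{\{acc\}}$ and $\info_{\{pl\}}$ then follow from Lemma~\ref{lem:informative}, since these functions are at least as informative as $\info_{acc}$ and $\info_{pl}$ respectively. In both cases the strategy is to choose the information so that the set $W$ of possible profiles is small enough that one order secures the planner's top outcome in every possible profile.

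\textbf{Acceptability points.} Fix distinct alternatives $a,b$ and take a profile $\prof{\pref}$ in which some voter $i$ has $\mathit{ACC}(\pref_i)=\{a,b\}$, with $a$ ranked first, and every other voter accepts only $a$. Then $acc(a)=n$, $acc(b)=1$, and $acc(c)=0$ for every other $c$.
\begin{itemize}
\item Since each voter accepts her top alternative, every $\prof{\pref'}\in W_{\info_{acc}(\prof{\pref})}$ has the same shape: exactly one voter accepts $\{a,b\}$, in either internal order, and everyone else accepts only $a$.
\item Consequently every outcome of Nom on such a profile lies in $\{\{a\},\{a,b\}\}$.
\item Let $\succ_{sp}$ rank $\{a,b\}$ first, and let $\prof{\sigma^\ast}$ present $b$ first to every voter.
\item Voters who do not accept $b$ still approve $a$ whenever it appears: their ballot is empty before $a$, and $b$ is never approved.
\item The voter accepting $b$ approves $b$ at step~1, since it is acceptable and there is no anchor yet. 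So $b$ is nominated, and $a$ is nominated by the other voters (or by this voter herself if $n=1$, because then $a$ is her top).
\item Thus $\text{Nom}(\prof{A_{\pref',\sigma^\ast}})=\{a,b\}$ for every possible $\prof{\pref'}$, which gives the first condition of Definition~\ref{def:manip-info}.
\item For the strict condition, take $\prof{\pref}$ itself and the order $\prof{\overline{\sigma}}$. This yields $\{a\}$, which is strictly worse for the planner.
\end{itemize}

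\textbf{Plurality points.} Let $a$ have $n$ plurality points, so every profile in $W_{\info_{pl}(\prof{\pref})}$ ranks $a$ first for all voters. Let $\succ_{sp}$ rank $\{a\}$ first, and let $\prof{\sigma^\ast}$ present $a$ first to everyone.
\begin{itemize}
\item By Remark~\ref{rem:top-prof}, every ballot under $\prof{\sigma^\ast}$ is $\{a\}$, so the outcome is $\{a\}$ in every possible profile. This is the planner's top outcome, which gives the first condition.
\item For the strict condition, take the tolerant possible profile in which all voters share the same ranking with $a$ on top. Under $\prof{\underline{\sigma}}$ each ballot is $\alt$, so Nom returns $\alt\neq\{a\}$.
\end{itemize}

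\textbf{Main obstacle.} The only delicate point is the acceptability case. We must check that every profile consistent with the acceptability information behaves well under $\prof{\sigma^\ast}$, including the variant in which the voter accepting $b$ ranks $b$ above $a$. This is handled by the observation that $b$, being shown first and acceptable, is always approved. It also uses $\prof{\sigma^\ast}\neq\prof{\overline{\sigma}}$, which holds because $a\neq b$.
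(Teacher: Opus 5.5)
Your proof is correct and follows the paper's argument essentially verbatim. It uses the same profile ($a$ accepted by all, $b$ by one voter), the same planner preference with $\{a,b\}$ on top, the same order showing $b$ first, and $\prof{\overline{\sigma}}$ as the strictness witness. The plurality case is the construction of Theorem~\ref{thm:manip-weakandtotalunaniinfoplu}, and the set-valued versions come via Lemma~\ref{lem:informative}.

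The only slip is your parenthetical for $n=1$. There $W_{\info_{acc}(\prof{\pref})}$ also contains the profile where the lone voter ranks $b$ above $a$, which your own first bullet allows. Under $\prof{\sigma^\ast}$ that profile yields $\{b\}$, so the first condition fails. Like the paper, you should simply assume $n\ge 2$. For $n=1$ a different witness works instead, e.g.\ $\{a\}\succ_{sp}\{a,b\}\succ_{sp}\{b\}$ with $a$ shown first.
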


\begin{proof}
First, for the case of acceptability information, consider a profile~$\prof{\pref}$ where alternative $a$ has $n$ acceptability points, alternative $b$ has $1$ acceptability point, and no other alternative has acceptability points. 
  Take a social planner with preference $\succ_{sp}$ that ranks $\{a,b\}$ first. 
  Then, the order $\prof{\sigma}$ such that $\sigma_i = (b,a,\ldots)$ for all $i\in N$ is an optimal strategy for the social planner. 
    Indeed, the first condition of manipulability is satisfied because $\text{Nom}(\prof{A_{\pref,\sigma}}) = \{a,b\} \succeq_{sp} X'$, for every $X'\subseteq \alt$ that can be the outcome of that profile under another order.
    For the second condition of manipulability, we need to find a profile with the same acceptability points as $\prof{\pref}$ for all alternatives, in which some order produces an outcome different from $\{a,b\}$. 
    For this purpose, consider a profile $\prof{\pref'}$ where $a$ is ranked first by every voter together with an order $\bar{\prof{\sigma}}$.
 Then $\text{Nom}(\prof{A_{\pref',\bar{\sigma}}}) = \{a\}$, and we are done.   

Second, for the case of plurality information,
the proof is similar to the proof of Theorem~\ref{thm:manip-weakandtotalunaniinfoplu}, without needing to use weak and total unanimity to justify the outcomes.
\end{proof}

Table~\ref{tab:anchor_strategy} provides a summary of all manipulability results we have obtained. Taken together, these results establish an informational hierarchy in the planner's ability to exploit anchoring. With full information, the planner can condition presentation orders on the voters' preference profile. With no information, manipulation is ruled out for broad classes of voting rules. Between these extremes, however, even limited aggregate information can restore manipulability. 

\begin{table}[H]
\centering
\begin{tabular}{@{}ll@{}}
\toprule
\textbf{Planner's Info} 
& \textbf{Manipulation Possible?} \\
\midrule
$\info_1$ 
& Yes, unless anchor-proof profile (Props~\ref{prop:troubledwaters}, \ref{prop:nomination-char}, \ref{prop:weakunacharacterisationprofile}) \\
\addlinespace
$\info_0$ 
& No (for W.unan.\ \& T.unan., Thm~\ref{thm:zero-info}; for Nom, Prop~\ref{prop:nom-info0manip}) \\
\addlinespace
$\info_{acc}$ 
& Yes (for W.unan., Anon., Neut., Thm \ref{thm:manip-unanimous}; for Nom, Prop~\ref{prop:nom-manip}) \\
\addlinespace
$\info_{pl}$ 
& Yes (for W.unan.\ \& T.unan., Thm~\ref{thm:manip-weakandtotalunaniinfoplu}; for \text{Nom}, Prop~\ref{prop:nom-manip}) \\

\bottomrule
\end{tabular}
\caption{Manipulability across informational assumptions.}
\label{tab:anchor_strategy}
\end{table}

To conclude this section, let us note that except for the case of zero information, all other information functions described here allow us to distinguish the status of the different alternatives in a profile. This is sensible in several contexts of collective decisions, such as polls of political elections and preliminary results of contests.

\section{Conclusion}\label{sec:conc}

Our analysis of anchor-proof voting rules has demonstrated how classical studies in social choice theory can be extended to accommodate systematic deviations from rational preference formation. We have considered voters with intrinsic preference rankings who are subject to an anchoring bias, influencing the formation of their approval ballots through the presentation order of alternatives. Our central result has shown that no non-constant approval-based voting rule is robust to the impact of such an anchoring bias. On the positive side, we have also found that a social planner who has no information about the voters' underlying preferences cannot safely use the presentation order to steer the voting outcome to her advantage.

Several directions for future work arise. One possibility is to consider alternative mechanisms for deriving anchor-biased ballots. For instance, trichotomous ballots could allow alternatives to be classified as approved, disapproved, or neutral, providing a more fine-grained representation of cognitive constraints. Randomization could also provide a way to mitigate some of the negative results, for example by assigning greater probability to orders that position objectively superior alternatives earlier. More generally, one could design voting rules that explicitly account for the presentation order, or study whether intrinsic preferences can be recovered from observed approval ballots and the orders under which they were generated.

The social planner's role also deserves further investigation. We have considered a particular class of planner preferences and a limited collection of information functions. Alternative objectives, such as targeting a specific outcome or treating alternatives symmetrically, may yield additional possibility results for anchor-proofness and manipulation.

Beyond anchoring, a systematic taxonomy of cognitive biases in voting could provide a basis for novel characterizations of voting rules, analogous to Fishburn's classification based on the information about the profile that a rule requires as input \citep{zwicker2016introduction}. Framing effects and social conformity biases are particularly promising candidates. More broadly, incorporating bounded rationality as a fundamental consideration in the design of collective decision procedures, rather than treating it as implementation noise, may lead to theoretically richer frameworks and more robust mechanisms for group decision-making. These considerations extend beyond voting to participatory budgeting, committee selection, peer review, recommendation systems, and resource allocation.


\section*{Statements and Declarations}
The authors have no competing interests to declare that are relevant to the content of this article.

\section*{Acknowledgments} We are grateful to the comments received by Ulle Endriss, by attendees of the 2026 meeting of the Society for the Social Choice and Welfare in Japan, of regular Economics seminars at the Maastricht University in the Netherlands and the Universidad Nacional del Sur in Argentina, as well as of the CRETE-2026 conference in Greece---their  feedback has been extremely valuable. We also acknowledge the CONDORCET project of PEPR Maths-Vives (ANR-24-EXMA-0001). 

\bibliographystyle{apalike}
\bibliography{anchor_bib}

\end{document}